\documentclass{article}[11pt]

\usepackage{amsmath}
\usepackage{mathrsfs}
\usepackage{amsthm}
\usepackage{amssymb}
\usepackage{comment}
\usepackage{cancel}
\usepackage{tikz}
\usepackage{caption}
\usepackage{graphicx}  
\usepackage{xcolor}
\usepackage{enumitem}
\usepackage{thmtools}
\usepackage{thm-restate}
\usetikzlibrary{shapes.geometric, arrows.meta}
\usetikzlibrary{positioning}

\usepackage[authoryear]{natbib}

\declaretheoremstyle[
    spaceabove=10pt,    
    spacebelow=10pt,    
    headfont=\itshape,
    bodyfont=\normalfont,
    qed=$\diamond$
]{myremark}

\declaretheoremstyle[
    spaceabove=10pt,    
    spacebelow=10pt,    
    headfont=\bfseries,
    bodyfont=\normalfont,
    qed=$\vartriangle$
]{myexample}

\declaretheorem[style=plain, name=Assumption]{assumption}

\declaretheorem[style=myexample, name=Example]{example}

\declaretheorem[style=definition, name=Definition]{definition}
\declaretheorem[style=myremark, name=Remark]{remark}

\newcommand{\ind}{\perp \!\!\!\perp}
\newcommand{\E}{\textnormal{E}}
\newcommand{\var}{\textnormal{Var}}
\newcommand{\cov}{\textnormal{Cov}}

\begin{document}     

\title{When are time series predictions causal? \\ The potential system and dynamic causal effects}
\author{Jacob Carlson and Neil Shephard\footnote{We are grateful for comments and questions from people attending seminars at Harvard, MIT, Princeton, and the University of Chicago. } \\
\textit{Department of Economics and Department of Statistics,}\\
\textit{\ Harvard University, Cambridge, MA 02138, USA} \\
}
\maketitle

\begin{abstract} The potential system is a nonparametric time series model for assessing the causal impact of moving an assignment at time $t$ on an outcome at future time $t+h$, accounting for the presence of features.  The potential system provides nonparametric content for, e.g., time series experiments, time series regression, local projection, impulse response functions and SVARs. It closes a gap between time series causality and nonparametric cross-sectional causal methods, and provides a foundation for many new methods which have causal content.       
    
\end{abstract}

\noindent Keywords: Causality, design-based inference, impulse response function, potential outcomes, sequential assignment, time series. 

\baselineskip=20pt

\section{Introduction}

Let $Y_{t+h}$ be an outcome at time $t+h$, where $h \ge 0$ is a horizon and $t$ is the current time, $X_t$ is a feature (which may be, e.g., an observed confounder), $A_t$ is an assignment and $D_{1:t-1}$ are past outcomes, features and assignments. When do time series data-based predictions, such as  
$$
\E[Y_{t+h}\mid X_t,D_{1:t-1},A_t=a_t] 
    - \E[Y_{t+h}\mid X_t,D_{1:t-1},A_t=a_t'],
$$
measure how changes in the assignment at time $t$ cause the outcomes at time $t+h$ to move?  This paper provides sufficient nonparametric conditions to answer this type of question.  

Our approach is based on defining a foundational ``potential system,'' denoted {\tt PS}. It directly connects familiar time series objects like impulse response functions to average treatment effects and, more generally, the time series causality literature to the nonparametric causal inference literature based either on population or design-based inference strategies.    

This paper is closely related to five time series papers as well as a stream of panel data papers associated with James M. Robins.  \cite%
{BojinovShephard(19)}, \cite{RambachanShephard(21)} and \cite{LinDing(25)} worked with a potential outcome based time series model, while \cite{AngristKuersteiner(11)} and \cite%
{AngristJordaKuersteiner(18)} define and work with what we call ``branch potential outcomes.''  Both potential outcomes and branch potential outcomes appear as a part of our {\tt PS} (and hence our system could be thought of as providing the primitives to these four papers).  

\cite{BojinovShephard(19)} provide many references to the literature on dynamic causal effects using potential outcome type objects.  The vast majority of the work in this type of literature on dynamic causal effects considers panel data, not
pure time series, which is the subject of this paper. The panel data literature is reviewed in, for example, \cite{HernanRobins(18)}, \cite{ArkhangelskyImbens(24)}, and \cite{ChernozhukovNeweySinghSyrgkanis(2023)}.  

A linear special case of the {\tt PS} is a structural vector autoregression (SVAR), the workhorse of modern applied linear time series methods.  Reviews of some of this work focusing on macroeconomics include \cite{KilianLutkepohl(17)}, \cite{FernandezVillaverdeRubioRamirez(10)}, \cite{StockWatson(18)}, \cite{Ramey(16)} and  \cite{JordaTaylor(24)}.  

``Granger causality'' has played an important role in time series over the last 50 years. Though inspiring, Granger causality is not really about causality, but about prediction. (``The definition of causality used above is based entirely on the predictability of some series'' \citep{Granger(69)}.)  Some of the literature on Granger causality is discussed in, for example, \cite{Kuersteiner(10)}, \cite{WhiteLu(10)} and   \cite{ShojaieFox(22)}.

\cite{HarveyDurbin(86)} tried to assess the causal impact of a one-time assignment using a time series model, applying it to assess the causal impact of the introduction of compulsion of seat belt wearing on driver deaths in the UK.  Synthetic control (\cite{AbadieGardeazabal(03)} and \cite{AbadieDiamondHainmueller(10)}) is a similar idea, but enriched. 
There a multivariate set of outcomes move together but only one is impacted by the intervention.  The multivariate data can help pin down the intervention under some assumed model.  \cite{Abadie(21)} provides a review.  

A separate ocean of work on time series causality focuses on ``control,'' where an engineer builds a system which collects data to control an output in some optimal way to their benefit.  The most famous version of this is linear/quadratic controller, e.g., \cite{Whittle(82),Whittle(83),Whittle(90a),Whittle(96)}, \cite{HansenSargent(14)} and \cite{HerbstSchorfheide(15)}.  Under the {\tt PS} the assignments can be selected to minimize expected loss given the past data --- as you would see in the control literature.  Hence the {\tt PS} bridges observational reduced form models, experiments for time series, and control models of dynamic decision making.

Much of the more modern material on control is phrased in terms of Markov decision processes (e.g., \cite{Puterman(05)}).  Sometimes the researcher uses the data to learn the Markov decision process itself.  That area is usually called reinforcement learning (e.g., \cite{SuttonBarto(18)}).   The stationary Markov version of the {\tt PS}, augmented with a loss function, again forms a bridge to this literature.    

Learning optimal dynamic treatment rules (or ``policies'' or ``regimes'') is often phrased using potential outcomes (e.g., \cite{Murphy(03)}, \cite{NieBrunskillWager(2021)}, \cite{HeckmanNavarro(2007)}, \cite{ChernozhukovNeweySinghSyrgkanis(2023)}, \cite{VivianoBradic(24)}, \cite{BradicJiZhang(2024)}). This literature connects to our work in the case of sequences of interventions, but it focuses on panel data. A notable recent exception is \cite{KitagawaWangXu(23)}, which learns optimal policies based on a single time series.   

\cite{BojinovSimchiLeviZhao(22)} and \cite{BasseDingToulis(20)} look at ``switchback designs'' to optimally learn sequences of treatment effects from time series.  There is a large other literature on sequential experiments which is not phrased in terms of potential outcomes, e.g., \cite{Efron(71)} and  \cite{GlynnJohariRasouli(20)}, as well as the substantial literature on so-called $N$-of-$1$ trials which appear prominently in, for example, the study of personalized medicine  (\cite{LilliePatayDiamantIssellTopolSchork(11)}). The design-based content of the potential system provides a nonparametric foundation for these settings as well.  Related recent work includes \cite{LiangRecht(25)}, \cite{SchnaffeAiroldi(26)} and \cite{LinDing(25)}. The latter relates potential outcomes to regression in the design-based context.   

Although phrased using potential outcomes, our system can also be written using directed acyclic graph (DAG) theory, expressed using the tools developed in the pioneering efforts of \cite{Pearl(09)} and coauthors. Important related causal graph theory topics include the ``Single World Intervention Graph'' (SWIG) associated with  \cite{RichardsonRobins(2013)}.  We use SWIG graphs to illustrate the {\tt PS} and various constraints on the equential assignment mechanism.    

The rest of this paper has six sections.  The potential system and different measures of the dynamic causal effects are defined in Section \ref{sec:Psystem}. Section \ref{Section: examples} explores several important examples of the {\tt PS} and its relationship to various other common models of causality in the time series literature.  Section \ref{Section: pred to causal} focuses on constraints on the sequential assignment mechanism and how they allow us to identify different measures of the dynamic causal effects. Section \ref{sec:ext} considers various extensions of the potential system, applying the framework to settings featuring instrumental variables, consecutive sequences of assignments, design-based causal inference, and stochastic dynamic programming (control). Section \ref{sect:conclusion} concludes. There is also an Appendix containing  proofs.  

Throughout, for any (random or deterministic) sequence $\{x_1, x_2, ..., x_T\}$ we denote for $T\ge s>t \ge 1$ the $x_{t:s} := \{x_t,...,x_{s}\}$, while $(A \ind B)\mid C$ denotes variables $A$ and $B$ are conditionally independent given $C$.

\section{Defining dynamic causality}\label{sec:Psystem}

\subsection{Defining the potential system}

The entire paper is based on the potential system, which we now define.

\begin{definition}[{\tt PS}] Start by defining two stochastic processes.  
    \begin{enumerate}
        \item The \textit{data generating process}, given by  Assumptions {\tt DGP.1} and {\tt DGP.2}.
        \item The \textit{counterfactual process}, given by  Assumptions {\tt CP.1} and {\tt CP.2}.
    \end{enumerate}
Assumption {\tt LP} links the data generating and counterfactual processes. Applying all five assumptions defines the ``potential system'' (denoted {\tt PS}). 
\end{definition}

First, the data generating process is set up using two assumptions. 

\begin{assumption}[{\tt DGP.1}]
Name the data seen at time $t$ as the split:
\[        D_{t}:=\left(X_{t}^{\mathtt{T}},A_{t}^{\mathtt{T}},Y_{t}^{\mathtt{T}}\right) ^{\mathtt{T}%
        },\quad t=1,...,T.
        \] 
We label the $X_{t} \in \mathcal{X}_t \subseteq \mathbb{R}^{d_X}$ as features; $A_t \in \mathcal{A}_t \subseteq \mathbb{R}^{d_A}$ as assignments; and $Y_{t} \in \mathcal{Y}_t \subseteq \mathbb{R}^{d_Y}$ as outcomes. Further define $\mathcal{D}_t:= \mathcal{X}_t \times \mathcal{A}_t \times \mathcal{Y}_t$,  $\mathcal{D}_{t:s} := \prod_{j=t}^{s} \mathcal{D}_j,$ and $\mathcal{A}_{t:s} := \prod_{j=t}^{s} \mathcal{A}_j,$ for $s \ge t$. 
\end{assumption}

\begin{remark}[Feature interpretation]
Depending on the assumptions made about them, features $X_{t}$ can play the role of observed confounders (explored throughout most of the paper), instruments (explored in Section \ref{sec:iv}), or whatever else may be suitable to a given empirical setting.
\end{remark}

\begin{remark}[Feature indexing]
The contemporaneous time indexing of features in Assumption {\tt DGP.1} is a convention. For example, features could also be characterized by the random variable $X_t^*$ where $X_{t}  = X_{t-1}^* $.
\end{remark}

\begin{assumption}[{\tt DGP.2}] Assume the time $t$ assignment is generated by the ``sequential assignment mechanism'' ({\tt SAM}),
$$
A_t = \alpha_t\left(D_{1:t-1}, X_t,V_t\right),\quad t=1,...,T,
$$
where $V_t$ is crystallized by time $t$, the $V_t \mid X_t, D_{1:t-1}$ is stochastic and $V_t \ind D_{1:t-1}$. Throughout, assume the function
$$\alpha_t := \{\alpha_t(d_{1:t-1},x_t,v_t):d_{1:t-1}\in \mathcal{D}_{1:t-1},x_t\in \mathcal{X}_t,v_t \in \mathcal{V}_t\},
$$ 
is deterministic with respect to knowledge at time 0. 

\end{assumption}

Second, the counterfactual process is set up using two assumptions. 

\begin{assumption}[{\tt CP.1}] 
The time $t$ ``potential feature'' and ``potential outcome'' are collected as
\[
Z_{t}(a_{1:T}):=\left\{ X_{t}(a_{1:T}),Y_{t}(a_{1:T})\right\}, \quad a_{1:T} \in \mathcal{A}_{1:T},\quad X_{t}(a_{1:T}) \in \mathcal{X}_t,\quad Y_{t}(a_{1:T}) \in \mathcal{Y}_t,\quad t=1,...,T,
\]
where $a_{1:T}$ is a possible assignment path that obeys both of:   
    \begin{enumerate}[labelsep=1em, leftmargin=5em]
    \item[{\tt CP.1a}] (Non-anticipation) For all $a_{1:T}$ and $a_{1:T}'\in \mathcal{A}_{1:T}$, the 
    $$
    Z_{t}(a_{1:T})=Z_{t}(a_{1:t},a_{t+1:T}^{\prime }).$$ We write this in shorthand as  
    $Z_{t}(a_{1:t}).$
    \item[{\tt CP.1b}] (Triangularity) For all $a_{1:t},a_{1:t}'$, the 
    $$
    \begin{array}{lll}
    X_t(a_{1:t}) &= X_t(a_{1:t-1},a_t').
    \end{array} 
    $$
    We write this in shorthand as $X_{t}(a_{1:t-1})$.
\end{enumerate} 
Per {\tt CP.1a} and {\tt CP.1b}, we simplify the notation to 
\[
Z_{t}(a_{1:t}) = \left\{ X_{t}(a_{1:t-1}), Y_{t}(a_{1:t}) \right\},
\]
and collect the path of counterfactuals for all $T$ periods as  
\[
Z_{1:T}(a_{1:T}) := \left\{ Z_{1}(a_{1}), \ldots, Z_{T}(a_{1:T}) \right\}.
\]
\end{assumption} 

\begin{remark}[Non-interference]
Assumption {\tt CP.1a} implies $Z_t(a_{1:T})$ is realized at time $t$ and cannot depend on future assignments $a_{t+1:T}$. This assumption rules out a time series form of what \cite{Cox(58book)} generically called ``interference.'' The use of non-anticipation arguments as important criteria for temporal causation appears in, for example, \cite{Granger(80cause)} and \cite{RambachanShephard(21)}.
\end{remark}

The left-hand side of Figure \ref{fig:causFig} visualizes the counterfactual paths of potential outcomes and confounders defined in {\tt CP.1} for binary assignments. The right-hand side shows $Z_{1:3}$ corresponding to $A_{1:3}=(1,1,0)$, highlighting an assigned path in bold.

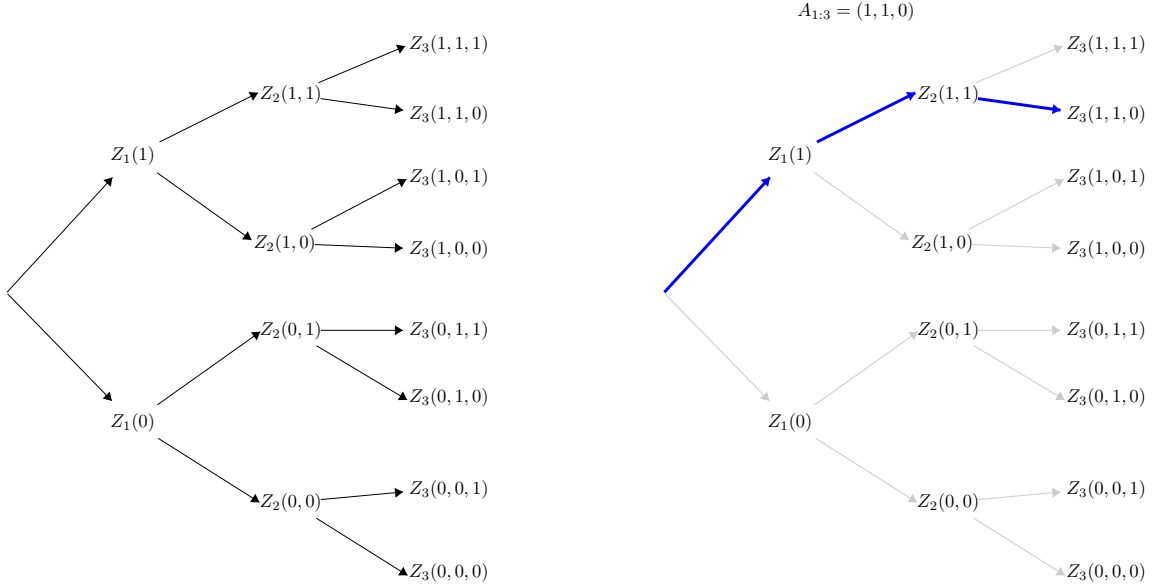
\begin{figure}[t]
\begin{center}
\scalebox{0.7}{\begin{tikzpicture}[scale=0.19]
    \tikzstyle{every node}+=[inner sep=0pt]
      \draw (23.4,-14.4) node {$Z_1(1)$};
      \draw (23.4,-41.1) node {$Z_1(0)$};
      \draw (39.2,-8.3) node {$Z_2(1,1)$};
      \draw (38.6,-23.2) node {$Z_2(1,0)$};
      \draw (39.2,-31.9) node {$Z_2(0,1)$};
      \draw (39.2,-49.1) node {$Z_2(0,0)$};
      \draw (55,-3.3) node {$Z_3(1,1,1)$};
      \draw (55,-10.3) node {$Z_3(1,1,0)$};
      \draw (55,-16.6) node {$Z_3(1,0,1)$};
      \draw (55,-23.8) node {$Z_3(1,0,0)$};
      \draw (55,-31.9) node {$Z_3(0,1,1)$};
      \draw (55,-38.6) node {$Z_3(0,1,0)$};
      \draw (55,-56.1) node {$Z_3(0,0,0)$};
      \draw (55,-47.8) node {$Z_3(0,0,1)$};
      \draw [black] (10.83,-28.09) -- (21.37,-16.61);
      \fill [black] (21.37,-16.61) -- (20.46,-16.86) -- (21.2,-17.54);
      \draw [black] (41.77,-33.44) -- (50.83,-38.86);
      \fill [black] (50.83,-38.86) -- (50.4,-38.02) -- (49.88,-38.88);
      \draw [black] (42.19,-48.83) -- (50.41,-48.07);
      \fill [black] (50.41,-48.07) -- (49.57,-47.65) -- (49.66,-48.64);
      \draw [black] (41.74,-50.69) -- (50.86,-56.41);
      \fill [black] (50.86,-56.41) -- (50.45,-55.56) -- (49.91,-56.41);
      \draw [black] (41.96,-7.13) -- (50.64,-3.47);
      \fill [black] (50.64,-3.47) -- (49.71,-3.32) -- (50.09,-4.24);
      \draw [black] (42.17,-8.72) -- (50.43,-9.88);
      \fill [black] (50.43,-9.88) -- (49.71,-9.27) -- (49.57,-10.27);
      \draw [black] (41.25,-21.8) -- (50.75,-16.8);
      \fill [black] (50.75,-16.8) -- (49.81,-16.73) -- (50.27,-17.61);
      \draw [black] (41.6,-23.32) -- (50.4,-23.68);
      \fill [black] (50.4,-23.68) -- (49.62,-23.15) -- (49.58,-24.15);
      \draw [black] (26.08,-13.06) -- (35.92,-8.14);
      \fill [black] (35.92,-8.14) -- (34.98,-8.05) -- (35.42,-8.95);
      \draw [black] (25.84,-16.14) -- (35.26,-22.86);
      \fill [black] (35.26,-22.86) -- (34.9,-21.99) -- (34.32,-22.8);
      \draw [black] (25.85,-39.36) -- (36.15,-32.04);
      \fill [black] (36.15,-32.04) -- (35.21,-32.09) -- (35.79,-32.91);
      \draw [black] (25.94,-42.7) -- (36.06,-49.1);
      \fill [black] (36.06,-49.1) -- (35.65,-48.25) -- (35.12,-49.09);
      \draw [black] (10.89,-28.25) -- (21.31,-38.95);
      \fill [black] (21.31,-38.95) -- (21.11,-38.03) -- (20.39,-38.73);
      \draw [black] (42.2,-31.9) -- (50.4,-31.9);
      \fill [black] (50.4,-31.9) -- (49.6,-31.4) -- (49.6,-32.4);
    \end{tikzpicture} } \hspace{2cm} 
\scalebox{0.7}{\begin{tikzpicture}[scale=0.19]
        \tikzstyle{every node}+=[inner sep=0pt]
        \draw (30, 0) node {$A_{1:3} = (1,1,0)$};
        \draw (23.4,-14.4) node {$Z_1(1)$};
        \draw (23.4,-41.1) node {$Z_1(0)$};
        \draw (39.2,-8.3) node {$Z_2(1,1)$};
        \draw (38.6,-23.2) node {$Z_2(1,0)$};
        \draw (39.2,-31.9) node {$Z_2(0,1)$};
        \draw (39.2,-49.1) node {$Z_2(0,0)$};
        \draw (55,-3.3) node {$Z_3(1,1,1)$};
        \draw (55,-10.3) node {$Z_3(1,1,0)$};
        \draw (55,-16.6) node {$Z_3(1,0,1)$};
        \draw (55,-23.8) node {$Z_3(1,0,0)$};
        \draw (55,-31.9) node {$Z_3(0,1,1)$};
        \draw (55,-38.6) node {$Z_3(0,1,0)$};
        \draw (55,-56.1) node {$Z_3(0,0,0)$};
        \draw (55,-47.8) node {$Z_3(0,0,1)$};
        \draw [ultra thick,blue] (10.83,-28.09) -- (21.37,-16.61);
        \fill [ultra thick,blue] (21.37,-16.61) -- (20.46,-16.86) -- (21.2,-17.54);
        \draw [gray!40] (41.77,-33.44) -- (50.83,-38.86);
        \fill [gray!40] (50.83,-38.86) -- (50.4,-38.02) -- (49.88,-38.88);
        \draw [gray!40] (42.19,-48.83) -- (50.41,-48.07);
        \fill [gray!40] (50.41,-48.07) -- (49.57,-47.65) -- (49.66,-48.64);
        \draw [gray!40] (41.74,-50.69) -- (50.86,-56.41);
        \fill [gray!40] (50.86,-56.41) -- (50.45,-55.56) -- (49.91,-56.41);
        \draw [gray!40] (41.96,-7.13) -- (50.64,-3.47);
        \fill [gray!40] (50.64,-3.47) -- (49.71,-3.32) -- (50.09,-4.24);
        \draw [ultra thick,blue] (42.17,-8.72) -- (50.43,-9.88);
        \fill [ultra thick,blue] (50.43,-9.88) -- (49.71,-9.27) -- (49.57,-10.27);
        \draw [gray!40] (41.25,-21.8) -- (50.75,-16.8);
        \fill [gray!40] (50.75,-16.8) -- (49.81,-16.73) -- (50.27,-17.61);
        \draw [gray!40] (41.6,-23.32) -- (50.4,-23.68);
        \fill [gray!40] (50.4,-23.68) -- (49.62,-23.15) -- (49.58,-24.15);
        \draw [ultra thick,blue] (26.08,-13.06) -- (35.92,-8.14);
        \fill [ultra thick,blue] (35.92,-8.14) -- (34.98,-8.05) -- (35.42,-8.95);
        \draw [gray!40] (25.84,-16.14) -- (35.26,-22.86);
        \fill [gray!40] (35.26,-22.86) -- (34.9,-21.99) -- (34.32,-22.8);
        \draw [gray!40] (25.85,-39.36) -- (36.15,-32.04);
        \fill [gray!40] (36.15,-32.04) -- (35.21,-32.09) -- (35.79,-32.91);
        \draw [gray!40] (25.94,-42.7) -- (36.06,-49.1);
        \fill [gray!40] (36.06,-49.1) -- (35.65,-48.25) -- (35.12,-49.09);
        \draw [gray!40] (10.89,-28.25) -- (21.31,-38.95);
        \fill [gray!40] (21.31,-38.95) -- (21.11,-38.03) -- (20.39,-38.73);
        \draw [gray!40] (42.2,-31.9) -- (50.4,-31.9);
        \fill [gray!40] (50.4,-31.9) -- (49.6,-31.4) -- (49.6,-32.4);
      \end{tikzpicture}}
\end{center}
\caption{The left figure shows all the potential outcome paths for $T=3$.
The right figure shows the observed outcome path $Z_{1:3}(A_{1:3})$ where $%
A_{1:3}=(1,1,0)^{\mathtt{T}}$, indicated by the thick blue line. The gray
arrows indicate the missing data. }
\label{fig:causFig}
\end{figure}

\begin{assumption}[{\tt CP.2}]\label{ass:CP2} Write the ``potential branch'' at time $t+h$ as 
$$D_{t,h}(a_t):=\{X_{t,h}(a_t),A_{t,h}(a_t),Y_{t,h}(a_t)\},\quad h=0,1,...,H,
$$ 
a system counterfactual.  It  corresponds to the assignment at time $t$ being set to $a_t$ and recording the system at horizon $h$ periods later. 
Assume the ``branch assignment'' at horizon $h$ is 
$$
A_{t, h}\left(a_t\right):= \begin{cases}
\begin{array}{lll}
     a_t , & h=0 \\
     \alpha_{t+h}\left(D_{1:t-1},D_{t,0:h-1}\left(a_t\right), X_{t,h}(a_t),V_{t+h}\right), & h > 0.
\end{array}
\end{cases} 
$$
Assume the ``branch potential outcome'' and ``branch potential feature'' at horizon $h$ are
$$
Z_{t,h}(a_t):= \{X_{t,h}(a_t)^\mathtt{T}, Y_{t,h}(a_t)^\mathtt{T}\}^\mathtt{T}:= \begin{cases}
\begin{array}{lll}
     \{X_{t}^\mathtt{T},Y_{t}(A_{1:t-1}, a_{t})^\mathtt{T}\}^\mathtt{T} , & h=0 \\
     \{X_{t+h}(A_{1:t-1}, A_{t,0:h-1}(a_{t}))^\mathtt{T}, Y_{t+h}(A_{1:t-1}, A_{t,0:h}(a_{t}))^\mathtt{T}\}^\mathtt{T}, & h > 0.
\end{array}
\end{cases} 
$$
    
\end{assumption}

\begin{remark}[Branch potential outcomes]
Definition {\tt CP.2} defines the branch potential outcomes $Y_{t,h}(a_{t})$.    \cite{AngristKuersteiner(11)} and \cite{AngristJordaKuersteiner(18)} worked directly with branch potential outcomes, without spelling out an underlying {\tt PS}.   \cite{BojinovShephard(19)} worked with potential outcomes $Y_{t+h}(a_{t:t+h})$ assuming the assignments were sequentially randomized and there were no confounders.
\end{remark}

The left-hand side of Figure \ref{fig:counter} visualizes the counterfactual paths of potential branches, again for binary assignments. The right-hand side shows $D_{t:t+2}$ corresponding to $A_t=1$, highlighting the assigned path.  
 
\begin{figure}[t]
\begin{center}
\scalebox{0.7}{\begin{tikzpicture}[scale=0.19]
    \tikzstyle{every node}+=[inner sep=0pt]
      \draw (23.4,-14.4) node {$D_{t,0}(1)$};
      \draw (23.4,-41.1) node {$D_{t,0}(0)$};
      \draw (39.2,-8.3) node {$D_{t,1}(1)$};
      \draw (39.2,-49.1) node {$D_{t,1}(0)$};
      \draw (55,-3.3) node {$D_{t,2}(1)$};
      \draw (55,-56.1) node {$D_{t,2}(0)$};

      \draw [black] (10.83,-28.09) -- (21.37,-16.61);
        \fill [black] (21.37,-16.61) -- (20.46,-16.86) -- (21.2,-17.54);

        \draw [black] (26.08,-13.06) -- (35.92,-8.14);
        \fill [black] (35.92,-8.14) -- (34.98,-8.05) -- (35.42,-8.95);

        \draw [black] (41.77,-6.44) -- (50.64,-3.47);
        \fill [black] (50.64,-3.47) -- (49.71,-3.32) -- (50.09,-4.24);

        \draw [black] (41.74,-50.69) -- (50.86,-56.41);
        \fill [black] (50.86,-56.41) -- (50.45,-55.56) -- (49.91,-56.41);

        \draw [black] (25.94,-42.7) -- (36.06,-49.1);
        \fill [black] (36.06,-49.1) -- (35.65,-48.25) -- (35.12,-49.09);

        \draw [black] (10.89,-28.25) -- (21.31,-38.95);
        \fill [black] (21.31,-38.95) -- (21.11,-38.03) -- (20.39,-38.73);

    \end{tikzpicture} } \hspace{2cm} 
\scalebox{0.7}{\begin{tikzpicture}[scale=0.19]
        \tikzstyle{every node}+=[inner sep=0pt]
        \draw (30, 0) node {$A_{t} = 1$};
        \draw (23.4,-14.4) node {$D_{t,0}(1)$};
        \draw (23.4,-41.1) node {$D_{t,0}(0)$};
        \draw (39.2,-8.3) node {$D_{t,1}(1)$};
        \draw (39.2,-49.1) node {$D_{t,1}(0)$};
        \draw (55,-3.3) node {$D_{t,2}(1)$};
        \draw (55,-56.1) node {$D_{t,2}(0)$};

        \draw [ultra thick,blue] (10.83,-28.09) -- (21.37,-16.61);
        \fill [ultra thick,blue] (21.37,-16.61) -- (20.46,-16.86) -- (21.2,-17.54);

        \draw [ultra thick,blue] (26.08,-13.06) -- (35.92,-8.14);
        \fill [ultra thick,blue] (35.92,-8.14) -- (34.98,-8.05) -- (35.42,-8.95);

        \draw [ultra thick,blue] (41.77,-6.44) -- (50.64,-3.47);
        \fill [ultra thick,blue] (50.64,-3.47) -- (49.71,-3.32) -- (50.09,-4.24);

        \draw [gray!40] (41.74,-50.69) -- (50.86,-56.41);
        \fill [gray!40] (50.86,-56.41) -- (50.45,-55.56) -- (49.91,-56.41);

        \draw [gray!40] (25.94,-42.7) -- (36.06,-49.1);
        \fill [gray!40] (36.06,-49.1) -- (35.65,-48.25) -- (35.12,-49.09);

        \draw [gray!40] (10.89,-28.25) -- (21.31,-38.95);
        \fill [gray!40] (21.31,-38.95) -- (21.11,-38.03) -- (20.39,-38.73);
         \end{tikzpicture}}
\end{center}
\caption{Time-$t$ system counterfactual paths. \ The left figure shows all
the potential branches $D_{t,h}(a_{t})$ paths for horizon $h=0,1,2$ and $%
a_{t}\in \{0,1\}$. The right figure shows the observed outcome path $%
D_{t:t+2}=D_{t,0:2}(A_{t})$ where $A_{t}=1$, indicated by the thick blue
line. The gray arrows indicate the missing data. }
\label{fig:counter}
\end{figure}
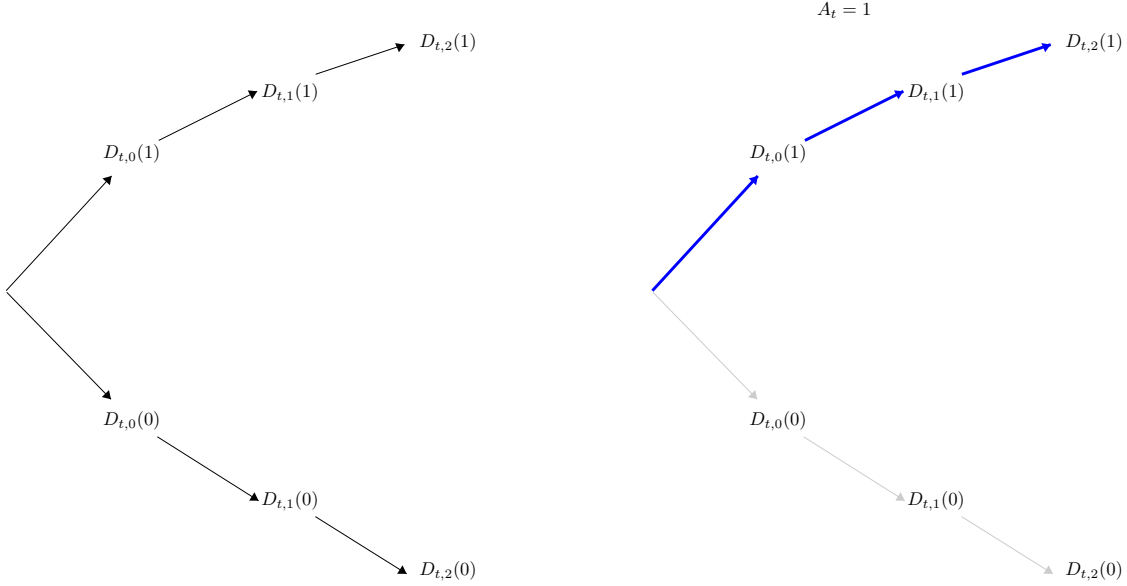

Finally, the data generating and counterfactual processes are linked by one assumption, completing the definition of the potential system.  

\begin{assumption}[{\tt LP}] Assume $$
Z_{1:T} = Z_{1:T}(A_{1:T}),$$ 
that is the data generating process and counterfactual process are ``consistent.'' 
    
\end{assumption}

\begin{remark}[``Consistency'']
Assumptions {\tt LP} and {\tt CP.1} enforce ``system consistency'': $Z_t = Z_{t,0}(A_{t}) = Z_t(A_{1:t})$. Assumption {\tt LP} and Assumption {\tt CP.2}  imply $Z_{t+h} = Z_{t,h}(A_{t})$ for $h>0$. This is the system version of the ``no hidden treatments'' component of the Stable Unit Treatment Value Assumption (SUTVA) formalized by \cite{Rubin(80)}; see also \cite{Robins(86)} and \cite{ImbensRubin(15)}.  
\end{remark}

\begin{remark}[Multi-period assignments] Recall {\tt CP.2} moves the single period assignment $a_t$.  Section \ref{sec:mpa} broadens the {\tt PS} definition, replacing {\tt CP.2} with {\tt CP.2$^\prime$}, which moves $a_{t:t+s}$, multi-period assignments, where $s\ge0$.  The rest of the system is unchanged and no new fundamental ideas are needed.     
\end{remark}

\subsection{Defining causality with respect to a {\tt PS}}

Having set up the {\tt PS}, it is now possible to define what a dynamic causal effect is and, in turn, how it can be summarized in the case it is stochastic. 

\begin{definition}[Dynamic causal effects] Assume a {\tt PS}. The ``dynamic causal effect'' is $$Y_{t,h}(a_{t})-Y_{t,h}(a_{t}^{\prime }),$$ the effect of moving the time $t$ assignment from $a_{t}^{\prime } \in \mathcal{A}_{t}$ to $a_{t}\in \mathcal{A}_{t}$ on the time $t+h$ outcome, where the horizon is $h \ge 0$. 
\end{definition}

We may summarize the dynamic causal effect in a number of ways. 

\begin{definition}[Describing dynamic causal effects]\label{defn:causaleffect} Assume the {\tt PS} is in $L^1$.  Then: 

\begin{enumerate} 

\item The \textquotedblleft average treatment effect\textquotedblright\ is
    $$
    {\tt ATE}_{t,h}(a_t,a_t') :=\E[\{Y_{t,h}(a_{t})-Y_{t,h}(a_{t}^{\prime })\}].
    $$

\item The \textquotedblleft conditional average treatment effect\textquotedblright\ is
    \[
    {\tt CATE}_{t,h}(a_t,a_t'):=\E[\{Y_{t,h}(a_{t})-Y_{t,h}(a_{t}^{\prime })\}\mid X_t].
    \]%
    
\item The \textquotedblleft filtered treatment 
    effect\textquotedblright\ is
    $$
    {\tt FTE}_{t,h}(a_t,a_t'):=\E[\{Y_{t,h}(a_{t})-Y_{t,h}(a_{t}^{\prime })\}\mid D_{1:t-1}].
    $$
    
\item The \textquotedblleft conditional filtered treatment
    effect\textquotedblright\ is
    $$
    {\tt CFTE}_{t,h}(a_t,a_t'):=\E[\{Y_{t,h}(a_{t})-Y_{t,h}(a_{t}^{\prime })\}\mid X_t,D_{1:t-1}].
    $$
\end{enumerate}
\end{definition}

\begin{remark}[Total versus direct dynamic causal effects]
The dynamic causal effect at horizon one (for example) is the ``total'' dynamic causal effect of moving $a_t'$ to $a_t$:
\begin{align*}
    Y_{t,1}(a_t) - Y_{t,1}(a_t') =& Y_{t+1}(A_{1:t-1},a_t,A_{t,1}(a_t)) - Y_{t+1}(A_{1:t-1},a_t',A_{t,1}(a_t')),
\end{align*}
which captures how moving $a_t'$ to $a_t$ also affects future assignments.  This compares to the ``direct'' dynamic causal effect one could be interested in,
\begin{align*}
     Y_{t+1}(A_{1:t-1},a_t,A_{t+1}) - Y_{t+1}(A_{1:t-1},a_t',A_{t+1}),
\end{align*}
which ignores how moving $a_t'$ to $a_t$ affects future assignments, and is expressed directly in terms of the potential outcomes (not branch potential outcomes).  When $h=0$ the total and direct dynamic causal effects are the same.
\end{remark}

\begin{remark}[Marginal dynamic causal effects]
The marginal dynamic causal effect, if it exists, is defined as 
$$
\frac{\partial Y_{t,h}(a_t)}{\partial a_t}.
$$
The average, conditional, filtered, and conditional filtered treatment effects have obvious marginal versions, taking expectations of the marginal causal effect.  The simplest case is $\partial Y_{t,0}(a_t)/\partial a_t = \partial Y_{t}(A_{1:t-1},a_t)/\partial a_t$, expressing the marginal causal effect in terms of derivatives of the potential outcome. Results for $h>0$ can be calculated recursively.
\end{remark}

\begin{remark}[Causal measures in context] 
Definition ${\tt ATE}_{t,h}(a_t,a_t')$ is typically called the ``impulse response function'' (e.g., \cite{Sims(80)}) in time series. Average treatment effects appear frequently in, for example, randomized control trials, e.g., \cite{ImbensRubin(15)}.  Definition ${\tt CATE}_{t,h}(a_t,a_t')$ appears frequently in cross-sectional observational causal studies, e.g., see \cite{ImbensRubin(15)}.  Definition ${\tt FTE}_{t,h}(a_t,a_t')$ is typically called the ``generalized impulse response function'' (e.g., \cite{KoopPesaranPotter(96)}) in time series.
\end{remark}

\section{Examples of the potential system}\label{Section: examples}

The following are important special cases of the {\tt PS}. Going forward, for notational convenience, we define $D_t(a_{1:t}) := \{X_t(a_{1:t-1}),a_t,Y_t(a_{1:t})\}$ and $D_{1:t}(a_{1:t}) := \{D_1(a_1),...,D_t(a_{1:t})\}$.

\subsection{Structural equation model potential systems}

We begin by discussing the nonparametric structural equation model (SEM) potential system, a highly general example of a {\tt PS} that adds useful additional structure to the potential outcomes and features.

\begin{example}[Triangular nonparametric SEM {\tt PS}]\label{ex:npsem}
Assume a {\tt PS}. The sequential triangular nonparametric simultaneous system sets the time $t$ potential feature and potential outcome as 
\begin{eqnarray*}
X_{t}(a_{1:t-1}) &=&\chi_t(D_{1:t-1}(a_{1:t-1}),U_{t}), \\
Y_{t}(a_{1:t}) &=&\gamma_t(D_{1:t-1}(a_{1:t-1}),X_{t}(a_{1:t-1}),a_t,W_{t}),
\end{eqnarray*}
where all  $\varepsilon_t := (U_t^\mathtt{T},V_t^\mathtt{T},W_t^\mathtt{T})^\mathtt{T}$ are crystallized by time $t$, are independent over time, the $\varepsilon_t \mid D_{1:t-1}$ are stochastic, the $\varepsilon_t \ind D_{1:t-1}$ and the functions $$\chi_t := \{\chi_t(d_{1:t-1},u_t):d_{1:t-1}\in \mathcal{D}_{1:t-1},u_t \in \mathcal{U}\}
$$
and 
$$
\gamma_t := \{\gamma_t(d_{1:t-1},x_t,a_t,w_t):d_{1:t-1}\in \mathcal{D}_{1:t-1},x_t \in \mathcal{X}_t,a_t\in \mathcal{A}_t,w_t \in \mathcal{W}_t\}
$$
are deterministic with respect to knowledge at time 0 for each $t=1,...,T$.  
\end{example}

This model can be viewed as requiring that Assumption {\tt DGP.2} defines a nonparametric structural equation model (NPSEM)  or structural causal model (SCM), formalized by \cite{Pearl(95),Pearl(09)}.  This framework for causal inference has many direct ties to potential outcomes frameworks for inference on counterfactuals (see, e.g.,  \cite{Imbens(2020)} or \cite{Geffner(2022)} for linkages).

\begin{remark}[Lucas critique]
Notice as $a_{t}$ moves, $(U_{t:T},V_{t:T},W_{t:T})$ (the primitives which drive the system) do not, and the functional forms $\chi_t,\gamma_t$ in the counterfactual process do not change (nor does $\alpha_t$). Taken together, this is a system version of assuming a sufficiently rich structure to avoid the Lucas critique in economics (see, e.g., \cite{Lucas(1976)}, \cite{McKayWolf(23)}, \cite{Sargent(2025)}).
\end{remark}

\subsection{Linear potential systems}

We may further specialize Example \ref{ex:npsem} by incorporating linearity, delivering the homogeneous linear Markov {\tt PS}. 

\begin{example}[Homogeneous linear Markov {\tt PS}]\label{ex:hlmps}
Assume a {\tt PS} for which the sequential assignment mechanism is given by
\[
A_t = \alpha_1 D_{t-1} + \alpha_0 X_t + \Gamma V_t
\]
for conformable $\alpha_0,\alpha_1,\Gamma$ and the potential outcome and feature are given by
\[
\begin{array}{lll}
X_{t}(a_{1:t-1}) & = & \chi_1 D_{t-1}(a_{1:t-1}) + \Delta U_{t} \\ 
Y_{t}(a_{1:t}) & = & \gamma_1 D_{t-1}(a_{1:t-1})+\gamma_{0,X} X_{t}(a_{1:t-1}) + \gamma_{0,A} a_t + \Omega W_{t},%
\end{array}%
\]%
for conformable $\chi_1,\Delta,\gamma_{0,X},\gamma_{0,A},\gamma_1,\Omega$, and for which $\{\varepsilon_{t}\}_{t\geq 1}:=\left\{(U_t^{\mathtt{T}}, V_t^{\mathtt{T}},  W_t^{\mathtt{T}})\right\}^{\mathtt{T}}_{t\geq 1} \overset{ind}{\sim}$.
\end{example}
Under the model of Example \ref{ex:hlmps}, Assumption {\tt LP} implies that the DGP is 
\[
\begin{array}{lll}
X_{t} & = & \chi_1 D_{t-1}+\Delta U_{t}, \\
A_{t} & = & \alpha_0 X_t + \alpha_1 D_{t-1}+\Gamma V_{t}, \\
Y_{t} & = & \gamma_{0,X} X_{t} + \gamma_{0,A} A_t 
+ 
\gamma_1 D_{t-1} + \Omega W_t.
\end{array}%
\]%
We may thus compactly write the DGP as a VAR(1):
$$
D_{t}=\phi D_{t-1}+B\varepsilon _{t}
$$ 
where
\[
\phi := \left(\begin{array}{c}\chi_1 \\ \alpha_1 + \alpha_0 \chi_1 \\ \gamma_1 + \gamma_{0,X} \chi_1 + \gamma_{0,A}(\alpha_1 + \alpha_0 \chi_1 )\end{array} \right),\quad B:=\left( 
\begin{array}{ccc}
\Delta & 0 & 0\\ 
\alpha_0\Delta  & \Gamma & 0\\ 
(\gamma_{0,X} + \gamma_{0,A} \alpha_0)\Delta & \gamma_{0,A}\Gamma  & \Omega
\end{array}%
\right).
\]
Writing the system this way makes clear it is Markovian.  

For the counterfactual process, we start by writing 
$$
\begin{array}{lll}
D_{t,0}(a_{t})&:=&\left( 
\begin{array}{c}
X_{t,0}(a_{t}) \\
A_{t,0}(a_{t}) \\
Y_{t,0}(a_{t})%
\end{array}%
\right)
\end{array}.
$$
As $A_{t,0}(a_{t})=a_{t}$, 
\[
D_{t,0}(a_{t})= \left( 
\begin{array}{c}
\chi_1 \\
0 \\
\gamma_1 + \gamma_{0,X} \chi_1  
\end{array}%
\right) D_{t-1}+\left( 
\begin{array}{c}
0 \\ 
I \\ 
\gamma_{0,A} 
\end{array}%
\right) a_{t} +
\left( 
\begin{array}{ccc}
\Delta  & 0 & 0\\ 
0 & 0 & 0\\ 
\gamma_{0,X} \Delta & 0& \Omega%
\end{array}%
\right) \varepsilon _{t},
\]%
and, for $h=1,2,...,H$, we define
\[
D_{t,h}(a_{t}):=\phi D_{t,h-1}(a_{t})+B\varepsilon _{t+h}.
\]%

The dynamic causal effect on all variables at horizion $h\geq 0$ is given by
    \begin{eqnarray*}
    D_{t,h}(a_{t})-D_{t,h}(a_{t}^{\prime }) &=&\phi \left\{
    D_{t,h-1}(a_{t})-D_{t,h-1}(a_{t}^{\prime })\right\}  =\phi ^{h}\left\{ D_{t,0}(a_{t})-D_{t,0}(a_{t}^{\prime })\right\}  \\
    &=&
    \Psi_h( a_{t}-a_{t}^{\prime }),\quad \Psi_h:= \phi ^{h}\left( 
    \begin{array}{c}
    0 \\ 
    I \\ 
    \gamma_{0,A}  
    \end{array}%
    \right), 
    \end{eqnarray*}
    which is non-stochastic.  Thus (abusing notation slightly)
    $$
    D_{t,h}(a_{t})-D_{t,h}(a_{t}^{\prime }) = {\tt ATE}_{t,h}(a_t,a_t') = {\tt CATE}_{t,h}(a_t,a_t') = {\tt FTE}_{t,h}(a_t,a_t') = {\tt CFTE}_{t,h}(a_t,a_t').
    $$

\begin{remark}[Slutzky-Frisch paradigm]\label{remark:sf}
Assume the homogeneous linear Markov {\tt PS} from Example \ref{ex:hlmps}.
\begin{enumerate}
    \item If $\{\varepsilon _{t}\}_{t \ge 1} \overset{iid}{\sim} $, then $\{D_t\}_{t \ge 1}$ can be written as a VAR(1). It can also be written as a ``structural'' VAR(1) --- or SVAR(1) \citep{Sims(80)} --- which is
\[
\tilde BD_{t}=\tilde \phi D_{t-1}+\varepsilon _{t}
\]
for $\tilde B := B^{-1}$ and $\tilde\phi := B^{-1}\phi$. Through recursive substitution of
the VAR(1) process, we can also write the system in a ``structural vector moving average'' (SVMA) representation:  
$$
D_{t+h} = \phi^{t+h-1}D_1+ \sum_{j=0}^{t+h-2} \Theta_j \varepsilon_{t+h-j},\quad \Theta_h := \phi^h B.
$$
If the absolute value of the largest eigenvalue of $\phi$ is strictly less than one, the $\{\varepsilon_s : s \in \mathbb{Z}\}$ is in $L^2$, and the process holds infinitely in the past, then 
$$D_{t+h} = \sum_{j=0}^{\infty} \Theta_j \varepsilon_{t+h-j}$$ 
exists; this representation is often labeled a SVMA($\infty$) process.  It appears at the heart of the so-called ``Slutzky-Frisch impulse-propagation paradigm'' in macroeconomics (see, e.g., \cite{StockWatson(18)} for an econometric overview).
    \item For simplicity of exposition, consider a scalar outcome, assignment and feature, and let $e_j$ be the $j$-th column of a $3 \times 3$ identity matrix. In the Slutzky-Frisch paradigm, the scalar function $h \mapsto e_3^{\mathtt{T}}\Theta_h e_2$ is called the impulse response function (IRF) for $V_t$ (the ``shock'' to assignments, not the assignment itself) on the outcome at horizon $h$, as
    \[ e_3^{\mathtt{T}}\Theta_h e_2=\E[Y_{t+h}\mid V_t = 1]-\E[Y_{t+h}\mid V_t = 0] = e_3^{\mathtt{T}}\Psi_h \Gamma. \]
    Notice further that the causal effect of moving $A_t$ from 0 to 1 on the outcome for any given $h \geq 0$, the scalar function $h\mapsto e_3^{\mathtt{T}}\Psi_h$ (recalling that $\Psi_h\in\mathbb{R}^3$), can then be considered the ``relative IRF'': the $e_2^\mathtt{T}\Theta_0 e_2 = \Gamma = \E[A_{t}\mid V_t=1]-\E[A_t \mid V_t = 0]$, and so almost surely
    \[Y_{t,h}(1)-Y_{t,h}(0)=e_3^{\mathtt{T}}\Psi_h = \frac{\E[Y_{t+h}\mid V_t = 1]-\E[Y_{t+h}\mid V_t = 0]}{\E[A_{t}\mid V_t=1]-\E[A_t \mid V_t = 0]} .\]
    This observation naturally extends to vector-valued outcomes, assignments, and features. 
\end{enumerate}
\end{remark}

\subsection{Potential systems without features}

Another important special case of the {\tt PS} occurs when there are no features and the assignments are independent through time. 
We may further specialize Example \ref{ex:npsem} to explore this setting: the Homogeneous Markov news impact {\tt PS}.  

\begin{example}[Homogeneous Markov news impact {\tt PS}]\label{ex:news}
Assume a {\tt PS}. The homogeneous Markov news impact {\tt PS} has 
\[
    \begin{array}{ll}
    X_{t}(a_{1:t-1}) & = 0 \\ 
    A_t & = V_t \\
    Y_{t}(a_{1:t}) & = \gamma(Y_{t-1}(a_{1:t-1}),a_{t},W_t)
    \end{array}%
    \]%
    where $\{(V_t,W_t)\}_{t\ge 1}$ is an independent sequence and $\gamma$ is a non-random function known at time 0. 
\end{example}

The DGP under Example \ref{ex:news} is then
\[
    \begin{array}{ll}
    A_t & = V_t \\
    Y_t &= \gamma(Y_{t-1},A_{t},W_t).
    \end{array}%
    \]%
Note further that the $h=0$ potential branch and $h > 0$ potential branch are, respectively,
$$
Y_{t,0}(a_t) = \gamma(Y_{t-1},a_{t},W_t),
\quad Y_{t,h}(a_t) = \gamma(Y_{t,h-1}(a_t),V_{t+h},W_{t+h}),
$$
observing that $A_{t+h}(a_t) = V_{t+h}$ for all $h > 0$. The causal effect at $h=0$ is then
$$
Y_{t,0}(a_t) - Y_{t,0}(a_t') = \gamma(Y_{t-1},a_{t},W_t) - \gamma(Y_{t-1},a_{t}',W_t)
$$
and the causal effect at $h > 0$ is
$$
Y_{t,h}(a_t) - Y_{t,h}(a_t') = \gamma(Y_{t,h-1}(a_{t}),V_{t+h},W_{t+h})-\gamma(Y_{t,h-1}(a_{t}'),V_{t+h},W_{t+h}).
$$

If it exists, the marginal dynamic causal effect is, for $h \ge 0$,
\begin{align*}
\frac{\partial Y_{t,h}(a_t)}{\partial a_t} &=  \frac{\partial \gamma(Y_{t,h-1}(a_t),V_{t+h},W_{t+h})}{\partial Y_{t,h-1}(a_t)} 
\frac{\partial Y_{t,h-1}(a_t)}{\partial a_t} \\
&= \bigg(\prod_{j=1}^{h} \frac{\partial \gamma(Y_{t,j-1}(a_t),V_{t+j},W_{t+j})}{\partial Y_{t,j-1}(a_t)} \bigg)
\frac{\partial Y_{t,0}(a_t)}{\partial a_t} \\
&= \bigg(\prod_{j=1}^{h} \frac{\partial \gamma(Y_{t,j-1}(a_t),V_{t+j},W_{t+j})}{\partial Y_{t,j-1}(a_t)} \bigg)
\frac{\partial \gamma(Y_{t-1},a_{t},W_t)}{\partial a_t},
\end{align*}
which is typically stochastic.  

News impact causal studies appear in financial econometrics, but are typically not expressed in causal language, and instead discuss ``parameterized mechanisms.'' In that literature, a major topic is understanding how time-varying volatility of speculative assets (e.g., \cite%
{BollerslevEngleNelson(94)} and \cite{Shephard(05)}) change in response to news (e.g.,  \cite{CampbellHentschel(92)} and  
\cite{EngleNg(93)}). 

A further special case of this structure is the homogeneous Markov partially linear news impact {\tt PS}, which sets $Y_{t}(a_{1:t}) = \gamma Y_{t-1}(a_{1:t-1}) + \zeta(a_{t})$.  Then   
$$
Y_t = \gamma Y_{t-1} + \zeta(V_{t}),\quad 
Y_{t,0}(a_t) = \gamma Y_{t-1} + \zeta(a_{t}),
\quad Y_{t,h}(a_t) = \gamma Y_{t,h-1}(a_t) + \zeta(V_{t+h})
$$
for $h>0$. 
Thus, for $h\ge 0$, the dynamic causal effect is non-stochastic with  
$$
Y_{t,h}(a_t) - Y_{t,h}(a_t') = \gamma \{Y_{t,h-1}(a_t)- Y_{t,h-1}(a_t')\} = \gamma^h \{\zeta(a_{t}) - \zeta(a_{t}')\}.
$$

\begin{remark}[Slutzky-Frisch paradigm, continued]
In macroeconomics, it is often assumed that assignments of interest are observed, independent, ``exogenous'' sequences: assignments are ``shocks'' or ``impulses.'' To explore this, return to the homogeneous linear Markov potential system of Example \ref{ex:hlmps}, now with assignment mechanism 
\[ A_t = \Gamma V_t\]
and impose no features,
\[
\begin{array}{lll}
X_{t}(a_{1:t-1}) & = & 0 \\ 
Y_{t}(a_{1:t}) & = & \gamma_1 D_{t-1}(a_{1:t-2})+ \gamma_{0,A} a_t + \Omega W_{t}.
\end{array}%
\]%
The data is therefore shorter: $D_t=(A_t^{\mathtt{T}},Y_t^{\mathtt{T}})^{\mathtt{T}}$ and $\varepsilon_t = (V_t^{\mathtt{T}}, W_t^{\mathtt{T}})^{\mathtt{T}}$. Recall $\varepsilon_t$ is independent through time.  The DGP is thus  
$D_t = \phi D_{t-1} + B \varepsilon_t$ where now
$$ 
\phi = \bigg(\begin{array}{ll}  0 \\
 \gamma_1
\end{array}\bigg), \quad 
B = \bigg(\begin{array}{ll} \Gamma & 0 \\
\gamma_{0,A} \Gamma & \Omega
\end{array}\bigg).
$$
As before, the dynamic causal effect is $\Psi_h( a_{t}-a_{t}^{\prime })$, though now
$$ \Psi_h= \phi ^{h}\left( 
    \begin{array}{c}
    I \\ 
    \gamma_{0,A} 
    \end{array}%
    \right)$$
and the IRF for assignments is
$$
\Theta_h e_1 = \phi^h B e_1 = \phi^h\left( 
    \begin{array}{c}
    I \\ 
    \gamma_{0,A} 
    \end{array}%
    \right) \Gamma.
$$
If $\Gamma = I$, i.e., the assignment is the shock, then $\Theta_h e_1  = \Psi_h$, and so the dynamic causal effect of moving  $a_t' = 0$ to $a_t = 1$ is exactly the IRF for assignments.
\end{remark}

\subsection{$m$-order potential systems}\label{sect:order}

We may also consider examples of the {\tt PS} that further restrict the temporal impact of assignments on potential outcomes and features.

\begin{definition}[$m$-order {\tt PS}]\label{def:morder}
Assume a {\tt PS}. It is $m$-order causal if, for each $t$, 
\[
A_t = \alpha_t(D_{t-m:t-1}, X_t, V_t)
\]
and
\begin{equation*}
Z_{t}(a_{1:t-m-1},a_{t-m:t})=Z_{t}(a_{1:t-m-1}^{\prime },a_{t-m:t}),\quad 
\text{for all}\quad a_{1:t},a_{1:t-m-1}^{\prime },\quad m\geq 0.
\end{equation*}
\end{definition}

For $m$-th order causal {\tt PS} we write the time $t$
potential outcome and feature using the shorthand $$Z_{t}(a_{t-m:t}),$$ burying the irrelevance of $a_{1:t-m-1}$. Again, this is a type of non-interference assumption \citep
{Cox(58book)}.

In the important case of Definition \ref{def:morder} where $\{\varepsilon_t\} = \{(U_t^\mathtt{T},V_t^\mathtt{T},W_t^\mathtt{T})^\mathtt{T}\}$ is a sequence of independent random vectors, then an $m$-order PS is $m$-order Markovian.  Hence statistical methods designed for $m$-order Markov processes can be used in this setting, but now they have causal content. 

\subsection{{\tt PS}-exogeneity}

The idea of exogeneity has a long history in econometrics and is defined in many different ways.  Some of time series literature on this topic is discussed in \cite{EngleHendryRichard(83)}. Here we give a definition in the context of a {\tt PS}, viewing exogeneity as a form of invariance with respect to a possible intervention.  That line of thought goes back at least to \cite{Simon(53)}. 
\begin{definition}[{\tt PS}-exogeneity]
    Assume a {\tt PS}.  An invariant coordinate of $Z_{t,h}(a_t)$ with respect to the intervention coordinate of the $a_t$ is labeled ``{\tt PS}-exogenous'' if this holds for all $t$ and $h$.  If all of $X_{t,h}(a_t)$ is invariant to all of $a_t$, then we call them ``{\tt PS}-exogenous features.'' 
\end{definition}

\begin{example}[Homogeneous Markov linear {\tt PS} with exogenous features] Return to the homogeneous Markov linear {\tt PS} from Example \ref{ex:hlmps}, but now constrain the dynamics of the feature such that
\[
\begin{array}{lll}
X_{t}(a_{1:t-1}) =  \chi_1 X_{t-1}(a_{1:t-2}) + \Delta U_{t}. 
\end{array}
\]%
Then $X_{t,0}(a_t) = \chi_1 X_{t-1} + \Delta U_{t}$, which does not depend upon $a_t$.  Further, $X_{t,h}(a_t) =  \chi_1 X_{t,h-1}(a_t) + \Delta U_{t+h}$, so $X_{t,h}(a_t)$ is invariant to $a_t$ for all $h$.  So the feature is {\tt PS}-exogenous with respect to $a_t$.    
\end{example} 

{\tt PS}-exogeneity is an important condition for extending the potential system to applications in, e.g., design-based causal inference, discussed further in Section \ref{sec:dbci}. 

\begin{example}[{\tt PS}-proxy]\label{ex:proxy} Sometimes researchers are interested in the causal effect of assignments on outcomes, but the assignments themselves are measured with error or only partially revealed (\cite{StockWatson(18)} discuss the relevant literature and associated linear methods).  Here we provide a nonparametric version of this setup.  Assume (i) a {\tt PS}; (ii) that $a_t$ splits as
$$
a_t = (a_t^{*\mathtt{T}},\bar{a}_t^{\mathtt{T}})^{\mathtt{T}},\quad  \mathcal{A}_t = \mathcal{A}_t^* \times \bar{\mathcal{A}}_t,\quad a_t^* \in \mathcal{A}^*_t, \ \bar{a}_t \in \bar{\mathcal{A}}_t;
$$
(iii) the entire $Z_{t,h}(a_t)$ is {\tt PS}-exogenous with respect to $\bar{a}_t$; (iv) write $D^*_t := (X_t^{\mathtt{T}},A_t^{*\mathtt{T}},Y_t^{\mathtt{T}})^{\mathtt{T}}$, split $V_t = (V_t^{*\mathtt{T}},\bar{V}_t^{\mathtt{T}})^{\mathtt{T}}$ and assume that the function $\alpha_t$ has the triangular form  
\begin{align*}
A_t^* &= \alpha^*_t(D^*_{1:t-1},X_t,V_t^*), \\
\bar{A_t} & = \bar{\alpha}_t(D^*_{1:t-1},X_t,\bar{V}_t,A_t^*); 
\end{align*}
(v) the $\bar{D}_t:=(X_t^\mathtt{T},\bar{A}_t^\mathtt{T},Y_t^\mathtt{T})^\mathtt{T}$ is observed for $t=1,...,T$; (vi) $A_t^*$ is not directly observed.  Then $\bar{A}_t$ is a {\tt PS}-proxy for the assignment $A_t^*$. An example of this is: 
$$
\bar{A}_t = a + B A_t^* + \bar{V}_t,\quad \bar{V}_t \ind A_t^*,\quad \E[\bar{V}_t]=0,\quad t=1,...,T, 
$$
where $a,B$ are non-stochastic. 
Here all the causal content in this model is entirely driven by $A_t^*$ but we only see a noisy (and possibly smaller or larger dimensional) version $\bar{A}_t$.  
\end{example}

\section{From predictions to causality}\label{Section: pred to causal}

With definitions and measures of dynamic causality now in place, we investigate assumptions and results that allow data-based predictions to have nonparametric causal interpretations.

\subsection{Assumptions on the sequential assignment mechanism}

A major way of progressing from data-based predictions to causality is to make assumptions that constrain the behavior of the {\tt SAM} from Assumption {\tt DGP.2}. To do this compactly we use the notation
$$
Y_{t,0:H}(\mathcal{A}_t) = \{Y_{t,h}(a_{t}): a_t \in \mathcal{A}_t, h =0,1,...,H\},
$$
collecting the $a_t$-potential branch at different time horizons.  

\begin{definition}[Constraints on the {\tt SAM}]\label{defn:AM} Assume a {\tt PS}.  
\begin{enumerate}
    \item If $\big[A_{t} \ind  Y_{t,0:H}(\mathcal{A}_t)\big] \mid D_{1:t-1},X_{t}$, we say the {\tt SAM} obeys ``branch-sequential unconfoundedness'' ({\tt SAM.BSU}). 
    \item If
    $
    \big[ A_{t} \ind Y_{t,0:H}(\mathcal{A}_t) \big] \mid D_{1:t-1},
    $
    we say the {\tt SAM} obeys ``branch-sequential randomization'' ({\tt SAM.BSR}). 
    \item If
    $
    \big[ A_{t} \ind Y_{t,0:H}(\mathcal{A}_t) \big] \mid X_t,
    $
    we say the {\tt SAM} obeys ``branch-unconfoundedness'' ({\tt SAM.BU}). 
    \item If
    $ A_{t} \ind Y_{t,0:H}(\mathcal{A}_t),
    $
    we say the {\tt SAM} obeys ``branch-randomization'' ({\tt SAM.BR}).   
    \end{enumerate} 
\end{definition}

Under special cases of the potential system, the conditions {\tt SAM.BSU}, {\tt SAM.BSR}, {\tt SAM.BU} and {\tt SAM.BR} follow under more primitive conditions. 

\begin{restatable}{theorem}{thmbranchseq}\label{thm:brancseq}  Assume the SEM {\tt PS} from Example \ref{ex:npsem}.

\begin{enumerate}
\item If   
$[V_t \ind W_t] \mid D_{1:t-1},X_t$ then {\tt SAM.BSU} holds.
\item If $X_t$ is $D_{1:t-1}$-measurable and $[V_t \ind W_t] \mid D_{1:t-1}$ then {\tt SAM.BSR} holds. 

\item If $A_t = V_t$ (assignments are independent through time) and $\big[V_t \ind (D_{1:t-1}, W_t)\big] \mid X_t$ then {\tt SAM.BU} holds.

\item If $A_t = V_t$ (assignments are independent through time) and $V_t \ind (D_{1:t-1},X_t, W_t)$ then {\tt SAM.BR} holds.
\end{enumerate}
\end{restatable}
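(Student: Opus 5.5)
The plan is to write the whole branch $Y_{t,0:H}(\mathcal{A}_t)$ as a deterministic function of the conditioning information and of primitives dated $t$ and later. Then the conditional independence reduces to an independence statement about the shocks.

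\textbf{Step 1: branch representation.} Fix $a_t$. Under the SEM {\tt PS} of Example \ref{ex:npsem}, together with {\tt CP.2} and {\tt LP}, the branch is built recursively:
\begin{itemize}
\item $X_{t,0}(a_t)=X_t$ and $Y_{t,0}(a_t)=\gamma_t(D_{1:t-1},X_t,a_t,W_t)$;
\item for $h>0$, $X_{t,h}(a_t)=\chi_{t+h}(D_{1:t-1},D_{t,0:h-1}(a_t),U_{t+h})$;
\item $A_{t,h}(a_t)=\alpha_{t+h}(D_{1:t-1},D_{t,0:h-1}(a_t),X_{t,h}(a_t),V_{t+h})$;
\item $Y_{t,h}(a_t)=\gamma_{t+h}(\cdots,W_{t+h})$.
\end{itemize}
By induction on $h$ we get
$$Y_{t,0:H}(\mathcal{A}_t)=G\big(D_{1:t-1},X_t,W_t,\varepsilon_{t+1:t+H}\big)$$
for a function $G$ that is deterministic given time-0 knowledge. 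The key point is that $V_t$ never enters, because $a_t$ replaces $A_t$. Meanwhile $A_t=\alpha_t(D_{1:t-1},X_t,V_t)$.

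\textbf{Step 2: parts 1 and 2.} Condition on $(D_{1:t-1},X_t)$. Then $A_t$ is a function of $V_t$ alone, and the branch is a function of $(W_t,\varepsilon_{t+1:t+H})$. It therefore suffices to show
$$[V_t\ind (W_t,\varepsilon_{t+1:t+H})]\mid D_{1:t-1},X_t.$$
\begin{itemize}
\item \emph{Future shocks.} The $\varepsilon$'s are independent over time, and $D_{1:t}$ together with $\varepsilon_t$ is a function of $D_1$-type initial conditions and $\varepsilon_{1:t}$. Hence $\varepsilon_{t+1:t+H}\ind (D_{1:t-1},X_t,V_t,W_t)$.
\item \emph{Combining.} This independence gives the factorization
$$p(v,w,e\mid d,x)=p(e)\,p(v,w\mid d,x).$$
The assumed $[V_t\ind W_t]\mid D_{1:t-1},X_t$ then splits $p(v,w\mid d,x)$, which yields part 1.
\item \emph{Part 2.} If $X_t$ is $D_{1:t-1}$-measurable, conditioning on $D_{1:t-1}$ is the same as conditioning on $(D_{1:t-1},X_t)$, so part 1 applies directly.
\end{itemize}

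\textbf{Step 3: parts 3 and 4.} With $A_t=V_t$, the statement {\tt SAM.BU} needs $V_t\ind G(D_{1:t-1},X_t,W_t,\varepsilon_{t+1:t+H})$ given $X_t$. It is enough to show
$$[V_t\ind (D_{1:t-1},W_t,\varepsilon_{t+1:t+H})]\mid X_t.$$
\begin{itemize}
\item \emph{Part 3.} The hypothesis gives $[V_t\ind(D_{1:t-1},W_t)]\mid X_t$. The future shocks are independent of $(V_t,D_{1:t-1},W_t,X_t)$, as noted in Step 2. Multiplying densities, or equivalently applying the contraction property of conditional independence, gives the claim.
\item \emph{Part 4.} The same argument works unconditionally: $V_t\ind(D_{1:t-1},X_t,W_t)$ combined with the independence of the future shocks gives $V_t\ind(D_{1:t-1},X_t,W_t,\varepsilon_{t+1:t+H})$, and hence $V_t\ind Y_{t,0:H}(\mathcal{A}_t)$.
\end{itemize}

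\textbf{Main obstacle.} The delicate step is justifying that $\varepsilon_{t+1:t+H}$ is jointly independent of $(D_{1:t-1},X_t,V_t,W_t)$. The example only states "independent over time" and $\varepsilon_s\ind D_{1:s-1}$. I would argue that $D_{1:t}$ is a measurable function of $\varepsilon_{1:t}$ (taking the initial condition as time-0 knowledge, or absorbing it into $\varepsilon_1$), so that independence across time delivers the joint statement. A secondary check is that the branch for all $a_t\in\mathcal{A}_t$ jointly is a measurable map of the same arguments. This holds because $G$ is constructed pathwise for every $a_t$ simultaneously, with no dependence on $V_t$.
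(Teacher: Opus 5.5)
Your proposal is correct and follows essentially the same route as the paper. You write the branch as a function of $(D_{1:t-1},X_t,W_t,\varepsilon_{t+1:t+H})$, reduce each {\tt SAM} condition to a conditional independence between $V_t$ and the remaining shocks (together with $D_{1:t-1}$ in parts 3--4), and close with the hypothesis plus independence of future shocks via contraction (your density factorization is that step); you also spell out details the paper leaves implicit, namely that $D_{1:t}$ is a function of $\varepsilon_{1:t}$ and that part 2 reduces to part 1 because $\sigma(D_{1:t-1},X_t)=\sigma(D_{1:t-1})$.
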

\begin{proof} See the Appendix.\end{proof}

\begin{remark}[``Exogenous'' assignment noise]
By Theorem \ref{thm:brancseq}, using the decomposition, weak union, and contraction properties of conditional independence, and recalling that $X_t = \chi_t(D_{1:t-1},U_t)$, for {\tt SAM.BSU} to hold it is sufficient that we assume  $$V_t \ind (U_t, W_t)\mid D_{1:t-1}.$$ Similar conclusions hold for the other parts of Theorem \ref{thm:brancseq} based on similar assumptions about the structural noise terms.
\end{remark}

\begin{remark}[Independent assignments]
The extra condition that $\{A_t\}_{t>0}$ is a sequence of independent random variables, which appears for {\tt SAM.BU} and {\tt SAM.BR}, is  certainly strong.  It starred in Example \ref{ex:news} about the homogeneous Markov news impact {\tt PS}.  Much of the economic time series literature measures dynamic causal quantities through impulse response functions by assuming assignments are independent through time. (In linear models, independence assumptions are often replaced by martingale differences or weak white noise assumptions.)  The outcome process is still flexible.  Only the assignments are highly constrained. 
\end{remark}

Assumption {\tt SAM.BSU} is stated as the conditional independence of $A_{t}$ and all the elements of $\{Y_{t,h}(a_{t}): a_t \in \mathcal{A}_t, h =0,1,...,H\}$.  A formally weaker alternative condition is to require many pairs of conditional independence rather than a single very large joint conditional independence.  This kind of pairwise assumption appears often in cross-sectional and panel population-based inference, for example \cite{HernanRobins(18)}. We define such conditions below.

\begin{definition}[Pairwise constraints on the {\tt SAM}]\label{defn:AMpair} Assume a {\tt PS}.  
\begin{enumerate}
    \item If $ 
    [A_{t} \ind Y_{t,h}(a_{t})] \mid (X_t,D_{1:t-1})
    $
    for each $a_{t} \in \mathcal{A}_{t}, h=0,1,...,H$, we say the {\tt SAM} obeys {\tt SAM.BSU-}. 
    \item If
       $
    [A_{t} \ind Y_{t,h}(a_{t})] \mid D_{1:t-1}
    $
    for each $a_{t} \in \mathcal{A}_{t}, h=0,1,...,H$, we say the {\tt SAM} obeys {\tt SAM.BSR-}. 
    \item If
    $
    [A_{t} \ind Y_{t,h}(a_{t})] \mid X_t
    $
    for each $a_{t} \in \mathcal{A}_{t}, h=0,1,...,H$, we say the {\tt SAM} obeys {\tt SAM.BU-}. 
    \item If
        $ 
    A_{t} \ind Y_{t,h}(a_{t})
    $ 
    for each $a_{t} \in \mathcal{A}_{t}, h=0,1,...,H$, we say the {\tt SAM} obeys {\tt SAM.BR-}.   
    \end{enumerate} 
\end{definition}

Under Assumption {\tt SAM.BSU-}, Figure \ref{fig:swit_bsu} depicts the potential branch of a {\tt PS} as part of a Single World Intervention Template (SWIT), which is a concise representation of a set of Single World Intervention Graphs (SWIGs).  The SWIG framework takes DAGs (Directed Acyclic Graphs) as inputs and ``splits'' them into SWIGs at nodes being intervened on, in this case at the node that represents $A_t$. Nodes downstream of the intervention become potential branches (the $Y_{t,0}(a_t)$ and $D_{t,h}(a_t)$ for $h=0,1,...,H$). (For other properties of SWIGs, see \cite{RichardsonRobins(2013)}.) At a glance, Figure \ref{fig:swit_bsu} shows that conditioning on $(X_t, D_{1:t-1})$ (which are acting as observed confounders) makes $A_t$ independent of branch potential outcomes (per standard analysis of probabilistic graphical models, blocking outgoing arrows from $(X_t, D_{1:t-1})$ separates $A_t$ from all other nodes), which is exactly what Assumption {\tt SAM.BSU-} states.

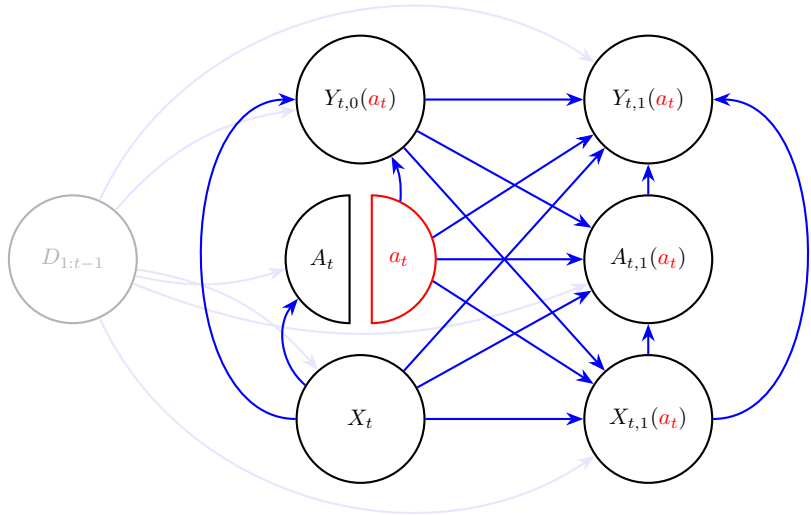
\begin{figure}[h!]
    \centering

\begin{tikzpicture}[scale=0.85, transform shape,
    normalnode/.style={
        draw,
        circle,
        thick,                       
        minimum size=2cm,          
        inner sep=0pt,
        align=center
    },
    xnode/.style={
        normalnode,
    },
    semicircle/.style={
        draw,
        shape=semicircle,
        thick,                      
        minimum width=2cm,        
        minimum height=1cm,      
        inner sep=0pt
    },
    redsemicircle/.style={
        draw,
        shape=semicircle,
        thick,                      
        minimum width=2cm,        
        minimum height=1cm,      
        inner sep=0pt,
        red
    },
    arrow/.style={
        ->,
        thick,
        blue,
        >=Stealth
    }
]

\node[normalnode] (yt) at (0, 2.5) {$Y_{t,0}(\textcolor{red}{a_t})$};
\node[normalnode] (ytp1) at (4.5, 2.5) {$Y_{t,1}(\textcolor{red}{a_t})$};
\node[xnode] (xt) at (0, -2.5) {$X_t$};
\node[xnode] (xtp1) at (4.5, -2.5) {$X_{t,1}(\textcolor{red}{a_t})$};

\node[normalnode] (atp1) at (4.5, 0) {$A_{t,1}(\textcolor{red}{a_t})$};

\node [
    draw, 
    circle, 
    color=black!30,
    minimum size=2cm, 
    inner sep=0pt, 
    thick
] (Q) at (-4.5, 0) {$D_{1:t-1}$};

\def\hshift{0.6} 

\node[semicircle, rotate=90] (at_left) at (-\hshift, 0) {\rotatebox{-90}{$A_t$}};
\node[redsemicircle, rotate=-90] (at_right) at (\hshift, 0) {\rotatebox{90}{$\textcolor{red}{a_t}$}};


\draw[arrow, color=blue!10, bend left=20] (Q) to (yt);
\draw[arrow, color=blue!10, bend left=50] (Q) to (ytp1);
\draw[arrow, color=blue!10, bend right=15] (Q) to (at_left);
\draw[arrow, color=blue!10, bend right=22] (Q) to (atp1);
\draw[arrow, color=blue!10, bend left=20] (Q) to (xt);
\draw[arrow, color=blue!10, bend right=50] (Q) to (xtp1);

\draw[arrow, bend right=15] (at_right) to (yt);
\draw[arrow] (at_right) -- (ytp1);
\draw[arrow] (at_right) -- (xtp1);
\draw[arrow] (at_right) -- (atp1);

\draw[arrow] (yt) -- (ytp1);
\draw[arrow] (yt) -- (atp1);
\draw[arrow] (yt) -- (xtp1);

\draw[arrow] (xt) -- (ytp1);
\draw[arrow] (xt) -- (atp1);
\draw[arrow] (xt) -- (xtp1);

\draw[arrow] (atp1) -- (ytp1);

\draw[arrow, bend left=90] (xt) to (yt);
\draw[arrow, bend left=45] (xt) to (at_left);
\draw[arrow, bend right=90] (xtp1) to (ytp1);
\draw[arrow] (xtp1) to (atp1);

\end{tikzpicture}
    
\caption{The {\tt PS} drawn as a ``Single World Intervention Template'' (SWIT), under branch-sequential unconfoundedness --- the {\tt SAM.BSU-} condition. }
    \label{fig:swit_bsu}
\end{figure}

\subsection{Identification of causal effects through predictions}

Using the conditions introduced in Definition \ref{defn:AM}, the following Theorem \ref{thm:bsu} shows that the causal summaries introduced in Definition \ref{defn:causaleffect} can be expressed in terms of population-based predictive quantities, delivering a version of the promise at the start of this paper: providing conditions where the difference of two data-based predictions are causal at horizon $h\ge0$.

\begin{restatable}{theorem}{thmbsu}\label{thm:bsu} Always assume a {\tt PS} in $L^1$ and set $h \ge 0$.   
\begin{enumerate}
    \item Additionally assume {\tt SAM.BR-}, then  
    $$
    {\tt ATE}_{t,h}(a_t,a_t')=\E[Y_{t+h}\mid A_{t}=a_{t}] 
    - \E[Y_{t+h}\mid A_{t}=a_{t}'].
    $$
    \item Additionally assume {\tt SAM.BU-}, then  
    $$
    {\tt CATE}_{t,h}(a_t,a_t') = \E[Y_{t+h}\mid X_t,A_{t}=a_{t}] 
    - \E[Y_{t+h}\mid X_t,A_{t}=a_{t}'].
    $$
    \item Additionally assume {\tt SAM.BSR-}, then  
    $$
    {\tt FTE}_{t,h}(a_t,a_t') = \E[Y_{t+h}\mid D_{1:t-1},A_{t}=a_{t}] 
    - \E[Y_{t+h}\mid D_{1:t-1},A_{t}=a_{t}'].
    $$
    \item Additionally assume {\tt SAM.BSU-}, then  
    $$
    {\tt CFTE}_{t,h}(a_t,a_t') = \E[Y_{t+h}\mid X_t,D_{1:t-1},A_{t}=a_{t}] 
    - \E[Y_{t+h}\mid X_t,D_{1:t-1},A_{t}=a_{t}'].
    $$
\end{enumerate}
\end{restatable}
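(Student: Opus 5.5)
The argument is the standard unconfoundedness identification argument, run horizon by horizon. It rests on one consistency identity, $Y_{t+h}=Y_{t,h}(A_t)$ for every $h\ge 0$. For $h=0$ this follows from {\tt LP} and {\tt CP.1}: $Y_t = Y_t(A_{1:t}) = Y_{t,0}(A_t)$, by the definition of $Z_{t,0}$ in {\tt CP.2}. For $h>0$ I argue by induction on $h$, using the recursion in {\tt CP.2}. Suppose $D_{t,j}(A_t)=D_{t+j}$ for all $j<h$. Then {\tt DGP.2} gives
$$A_{t,h}(A_t)=\alpha_{t+h}(D_{1:t-1},D_{t,0:h-1}(A_t),X_{t,h}(A_t),V_{t+h})=A_{t+h},$$
provided also $X_{t,h}(A_t)=X_{t+h}(A_{1:t-1},A_{t,0:h-1}(A_t))=X_{t+h}(A_{1:t+h-1})=X_{t+h}$. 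That last equality holds by the inductive hypothesis together with {\tt LP}. Likewise $Y_{t,h}(A_t)=Y_{t+h}(A_{1:t+h})=Y_{t+h}$. This is exactly the ``system consistency'' recorded in the Remark after {\tt LP}, so I cite it there and only sketch the induction.

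With consistency in hand, I prove part 4; the other parts are the same computation with a different conditioning set. Let $\mathcal{F}=(X_t,D_{1:t-1})$. Fix $a_t\in\mathcal{A}_t$. On the event $\{A_t=a_t\}$, consistency gives $Y_{t+h}=Y_{t,h}(a_t)$, so
$$\E[Y_{t+h}\mid \mathcal{F},A_t=a_t]=\E[Y_{t,h}(a_t)\mid \mathcal{F},A_t=a_t].$$
{\tt SAM.BSU-} says $A_t\ind Y_{t,h}(a_t)$ conditionally on $\mathcal{F}$, so the right side equals $\E[Y_{t,h}(a_t)\mid\mathcal{F}]$. The $L^1$ assumption on the {\tt PS} ensures this expectation exists. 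Subtracting the same identity at $a_t'$ and using linearity of conditional expectation gives ${\tt CFTE}_{t,h}(a_t,a_t')$.

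Parts 1–3 follow by replacing $\mathcal{F}$ with the trivial $\sigma$-field, with $X_t$, and with $D_{1:t-1}$, invoking {\tt SAM.BR-}, {\tt SAM.BU-} and {\tt SAM.BSR-} respectively. The pairwise (minus) versions suffice because each step uses only one $a_t$ at a time.

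The main obstacle is conditioning on the event $\{A_t=a_t\}$ when $A_t$ is continuous, since that event may have probability zero. I would handle it by working with a regular version of the conditional expectation $m(f,a)=\E[Y_{t+h}\mid\mathcal{F}=f,A_t=a]$. The claim is then that $m(\cdot,a_t)=\E[Y_{t,h}(a_t)\mid\mathcal{F}=\cdot\,]$ almost surely. To show it, I would use $Y_{t+h}\mathbf{1}\{A_t\in B\}=Y_{t,h}(A_t)\mathbf{1}\{A_t\in B\}$ and conditional independence to verify the defining integral identities. This requires either a positivity/overlap condition on $a_t$ (implicit in the statement, since conditioning on $A_t=a_t$ must be meaningful) or reading the statement for $a_t$ in the support of $A_t$ given $\mathcal{F}$. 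I would state this caveat explicitly rather than resolve it in full generality.
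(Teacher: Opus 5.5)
Your proposal is correct and is essentially the paper's proof. The paper also rewrites $\E[Y_{t+h}\mid A_t=a_t,X_t,D_{1:t-1}]$ as $\E[Y_{t,h}(a_t)\mid A_t=a_t,X_t,D_{1:t-1}]$ by system consistency, which you verify by induction where the paper simply cites the {\tt PS}, then removes $A_t=a_t$ from the conditioning set by {\tt SAM.BSU-}, and treats the other three parts as the same computation with a different conditioning set.

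One correction to your closing aside, a point the paper never raises: for continuous $A_t$ the obstruction is not positivity but that the pairwise conditions say nothing on the diagonal $a_t=A_t$. For example, $A_t=V_t\sim U(0,1)$ with $Y_t(a_{1:t})=a_t\mathbf{1}\{V_t=a_t\}$ (later outcomes zero) satisfies {\tt SAM.BR-} with full support and ${\tt ATE}_{t,0}(a,a')=0$, yet $Y_t=A_t$, so $\E[Y_t\mid A_t=a]-\E[Y_t\mid A_t=a']=a-a'$. In that case you would need a path-level condition such as $Y_{t,h}(a_t)=\phi(a_t,\xi)$ with $\phi$ jointly measurable and $\xi\ind A_t$ given the conditioning variables, which is essentially what the SEM conditions of Theorem~\ref{thm:brancseq} deliver.
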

\begin{proof} See the Appendix.\end{proof}

Such predictive quantities are not, in general, easy to estimate or approximate in practice.  However, we have made progress, moving from counterfactuals to observables which can be modeled and predicted: we have ``identified'' the causal objects defined in the earlier sections.  

Assumption {\tt SAM.BSR-} yields the SWIT in Figure \ref{fig:swit_BSR}, which has pruned off the confounders compared to Figure \ref{fig:swit_bsu}. The corresponding result relating causal quantities to data quantities under {\tt SAM.BSR-} is given in the third part of Theorem \ref{thm:bsu}. Assumption {\tt SAM.BU-} yields the SWIT given in the left-hand side of Figure \ref{fig:swit_other}. This is the same as the SWIT for {\tt SAM.BSU-} except the dependence on the history is removed.  The corresponding result relating causal quantities to data quantities under {\tt SAM.BU-} is given in the second part of Theorem \ref{thm:bsu}. Assumption {\tt  AM.BR-} yields the SWIT for branch-randomization, shown on the right-hand side of Figure \ref{fig:swit_other}.  Confounders no longer appear in the graph.  The corresponding result relating causal quantities to data quantities under {\tt SAM.BR-} is given in the first part of Theorem \ref{thm:bsu}. 

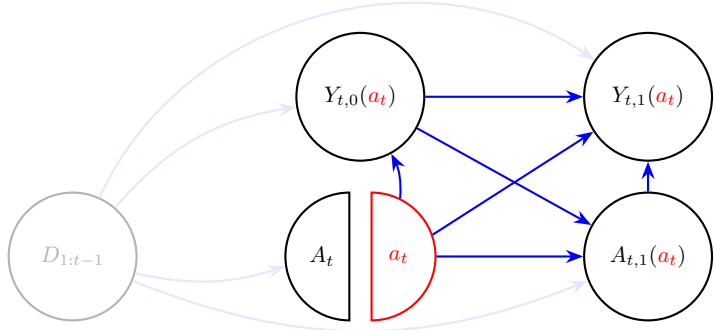
\begin{figure}[h!]
    \centering

\begin{tikzpicture}[scale=0.85, transform shape,
    normalnode/.style={
        draw,
        circle,
        thick,                       
        minimum size=2cm,          
        inner sep=0pt,
        align=center
    },
    xnode/.style={
        normalnode,
    },
    semicircle/.style={
        draw,
        shape=semicircle,
        thick,                      
        minimum width=2cm,        
        minimum height=1cm,      
        inner sep=0pt
    },
    redsemicircle/.style={
        draw,
        shape=semicircle,
        thick,                      
        minimum width=2cm,        
        minimum height=1cm,      
        inner sep=0pt,
        red
    },
    arrow/.style={
        ->,
        thick,
        blue,
        >=Stealth
    }
]

\node[normalnode] (yt) at (0, 2.5) {$Y_{t,0}(\textcolor{red}{a_t})$};
\node[normalnode] (ytp1) at (4.5, 2.5) {$Y_{t,1}(\textcolor{red}{a_t})$};

\node[normalnode] (atp1) at (4.5, 0) {$A_{t,1}(\textcolor{red}{a_t})$};

\node [
    draw, 
    circle, 
    color=black!30,
    minimum size=2cm, 
    inner sep=0pt, 
    thick
] (Q) at (-4.5, 0) {$D_{1:t-1}$};

\def\hshift{0.6} 

\node[semicircle, rotate=90] (at_left) at (-\hshift, 0) {\rotatebox{-90}{$A_t$}};
\node[redsemicircle, rotate=-90] (at_right) at (\hshift, 0) {\rotatebox{90}{$\textcolor{red}{a_t}$}};


\draw[arrow, color=blue!10, bend left=20] (Q) to (yt);
\draw[arrow, color=blue!10, bend left=50] (Q) to (ytp1);
\draw[arrow, color=blue!10, bend right=15] (Q) to (at_left);
\draw[arrow, color=blue!10, bend right=22] (Q) to (atp1);

\draw[arrow, bend right=15] (at_right) to (yt);
\draw[arrow] (at_right) -- (ytp1);
\draw[arrow] (at_right) -- (atp1);

\draw[arrow] (yt) -- (ytp1);
\draw[arrow] (yt) -- (atp1);

\draw[arrow] (atp1) -- (ytp1);

\end{tikzpicture}
    
\caption{The {\tt PS} drawn as a ``Single World Intervention Template'' (SWIT) under sequential randomization  --- the {\tt SAM.BSR-} case.}
    \label{fig:swit_BSR}
\end{figure}
 
\begin{figure}[h!]
  \centering
  \begin{minipage}[b]{0.4\textwidth}
    \flushleft
    \begin{tikzpicture}[scale=0.85, transform shape,
        normalnode/.style={
            draw,
            circle,
            thick,
            minimum size=2cm,
            inner sep=0pt,
            align=center
        },
        xnode/.style={normalnode},
        semicircle/.style={
            draw, shape=semicircle, thick,
            minimum width=2cm, minimum height=1cm,
            inner sep=0pt
        },
        redsemicircle/.style={
            draw, shape=semicircle, thick,
            minimum width=2cm, minimum height=1cm,
            inner sep=0pt, red
        },
        arrow/.style={->, thick, blue, >=Stealth}
    ]
    \node[normalnode] (yt)   at (0,  2.5) {$Y_{t,0}(\textcolor{red}{a_t})$};
    \node[normalnode] (ytp1) at (4.5,2.5) {$Y_{t,1}(\textcolor{red}{a_t})$};
    \node[xnode]      (xt)   at (0, -2.5) {$X_t$};
    \node[xnode]      (xtp1) at (4.5,-2.5) {$X_{t,1}(\textcolor{red}{a_t})$};
    \node[normalnode] (atp1) at (4.5, 0)   {$A_{t,1}(\textcolor{red}{a_t})$};
    \def\hshift{0.6}
    \node[semicircle, rotate=90]    (at_left)  at (-\hshift,0) {\rotatebox{-90}{$A_t$}};
    \node[redsemicircle, rotate=-90] (at_right) at (\hshift,0)  {\rotatebox{90}{$\textcolor{red}{a_t}$}};
    \draw[arrow, bend right=15] (at_right) to (yt);
    \draw[arrow] (at_right) -- (ytp1);
    \draw[arrow] (at_right) -- (xtp1);
    \draw[arrow] (at_right) -- (atp1);
    \draw[arrow] (yt) -- (ytp1);
    \draw[arrow] (yt) -- (atp1);
    \draw[arrow] (yt) -- (xtp1);
    \draw[arrow] (xt) -- (ytp1);
    \draw[arrow] (xt) -- (atp1);
    \draw[arrow] (xt) -- (xtp1);
    \draw[arrow] (atp1) -- (ytp1);
    \draw[arrow, bend left=90]  (xt)    to (yt);
    \draw[arrow, bend left=45]  (xt)    to (at_left);
    \draw[arrow, bend right=90] (xtp1) to (ytp1);
    \draw[arrow] (xtp1) -- (atp1);
    \end{tikzpicture}
  \end{minipage}\hfill
  \begin{minipage}[b]{0.4\textwidth}
    \flushleft
    \begin{tikzpicture}[scale=0.85, transform shape,
        normalnode/.style={
            draw,
            circle,
            thick,
            minimum size=2cm,
            inner sep=0pt,
            align=center
        },
        semicircle/.style={
            draw, shape=semicircle, thick,
            minimum width=2cm, minimum height=1cm,
            inner sep=0pt
        },
        redsemicircle/.style={
            draw, shape=semicircle, thick,
            minimum width=2cm, minimum height=1cm,
            inner sep=0pt, red
        },
        arrow/.style={->, thick, blue, >=Stealth}
    ]
    \node[normalnode] (yt)   at (0,  2.5) {$Y_{t,0}(\textcolor{red}{a_t})$};
    \node[normalnode] (ytp1) at (4.5,2.5) {$Y_{t,1}(\textcolor{red}{a_t})$};
    \node[normalnode] (atp1) at (4.5, 0)   {$A_{t,1}(\textcolor{red}{a_t})$};
    \def\hshift{0.6}
    \node[semicircle, rotate=90]    (at_left)  at (-\hshift,0) {\rotatebox{-90}{$A_t$}};
    \node[redsemicircle, rotate=-90] (at_right) at (\hshift,0)  {\rotatebox{90}{$\textcolor{red}{a_t}$}};
    \draw[arrow, bend right=15] (at_right) to (yt);
    \draw[arrow] (at_right) -- (ytp1);
    \draw[arrow] (at_right) -- (atp1);
    \draw[arrow] (yt) -- (ytp1);
    \draw[arrow] (yt) -- (atp1);
    \draw[arrow] (atp1) -- (ytp1);
    \end{tikzpicture}
  \end{minipage}
\caption{The {\tt PS} drawn as a ``Single World Intervention Template'' (SWIT).  Left-hand side shows the branch unconfoundedness --- the {\tt SAM.BU-} case. The right-hand side shows the the branch randomization --- the {\tt SAM.BR-} case. }
\label{fig:swit_other}
\end{figure}
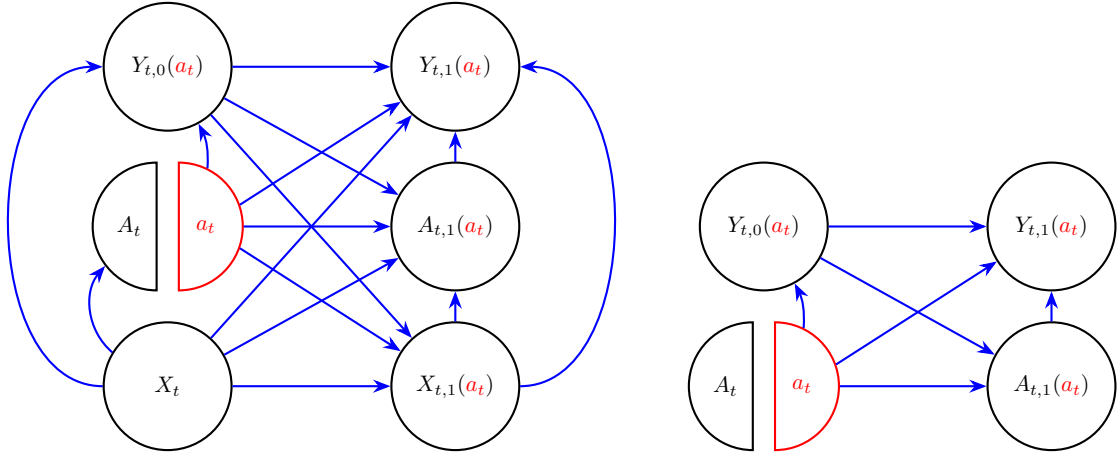

\subsection{Linear projection and covariance stationarity}

We saw in Theorem \ref{thm:bsu} various conditions on the {\tt PS} that allow the ${\tt ATE}_{t,h}(a_t,a_t')$, ${\tt CATE}_{t,h}(a_t,a_t')$, ${\tt FTE}_{t,h}(a_t,a_t')$ and ${\tt CFTE}_{t,h}(a_t,a_t')$ to be written as data-based (conditional) expectations. In some applications it is helpful to replace the data-based conditional expectations by projections. Economists often use linear projections (see, e.g., \cite{AngristPischke(09)}), which we now consider in detail.  

Before proceeding, to establish notation,  denote the usual linear projection of a generic random variable $A$ on 1 and a generic random variable $B$ as $\operatorname{LP}[A\mid 1,B] = \kappa + \beta B$, where $\beta=\cov(A,B)\var(B)^{-1}$ and $\kappa = \E[A] - \beta \E[B]$, so that 
\begin{align*}
(\kappa,\beta) &= \underset{k,b}{\arg} \min \E_{A,B}[(A-k - b B)^2] \\
 & = \underset{k,b}{\arg} \min \E_B[(\E[A|B] -k - b B)^2]. 
\end{align*} 
For this setup to make sense, $A,B$ must both be in $L^2$ and $\var(B)>0$. Here $\E_{A,B}$ is the expectation with respect to both $A$ and $B$. $\E_{B}$ is the expectation solely with respect to $B$.  

Assume the {\tt PS} is in $L^2$ and work under the {\tt SAM.BR-} condition, so $$\E[Y_{t+h}(a_t)] = \E[Y_{t+h} \mid A_t=a_t].$$ Then the linear projection of $Y_{t+h}$ on 1, $A_t$ is  
$$
\operatorname{LP}[Y_{t+h} \mid 1,A_t=a_t] = \kappa_{t,h} + \beta_{t,h} a_t,\quad \text{where} \quad \beta_{t,h}  = 
\cov(Y_{t+h},A_t)   
\var(A_t)^{-1}
$$
assuming $\var(A_t)>0$.  
So the linear projection of the conditional average treatment effect ${\tt ATE}_{t,h}(a_t,a_t') = \E[Y_{t+h}(a_t)] - \E[Y_{t+h}(a_t')]$ (which is non-stochastic) is
$$
\operatorname{LP}[Y_{t+h} \mid 1,A_t=a_t] - \operatorname{LP}[Y_{t+h} \mid 1,A_t=a_t'] = \beta_{t,h}(a_t - a_t'),
$$
(which is non-stochastic). 

\begin{remark}[Linear projection and binary assignment] If the {\tt PS} is in $L^2$ and $\mathcal{A}_t = \{0,1\}$, then 
$$
\E[Y_{t+h}|A_t=1] - \E[Y_{t+h}|A_t=0],
$$
the ${\tt ATE}_{t,h}(1,0)$ under {\tt SAM.BR-}, is just a difference in means, and can be implemented exactly with a linear projection, as
\begin{align*}
    \beta_{t,h} &= 
    \cov(Y_{t+h},A_t)   
\var(A_t)^{-1} \\&= (\E[Y_{t+h}A_t] -  \E[Y_{t+h}]E[A_t]) \{\E[A_t](1-E[A_t])\}^{-1} \\
&= \E[Y_{t+h}|A_t=1] - \E[Y_{t+h}|A_t=0].
\end{align*} 
Further assuming that the {\tt PS} is covariance stationary then makes estimation easy.
\end{remark}

\begin{remark}[Linear projection and noisy assignment]
Recall the {\tt PS}-proxy setting from Example \ref{ex:proxy}, and assume that all the causal content in the model is entirely driven by scalar $A_t^*$, but we only see a noisy version $\bar{A}_t$, given by
$$
\bar{A}_t = a + B A_t^* + \bar{V}_t,\quad \bar{V}_t \ind D_{1:T}^*,\quad \E[\bar{V}_t]=0,\quad t=1,...,T, 
$$
where $a,B$ are non-stochastic. Notice that in this setting, the linear projection coefficient is
\begin{align*}
     \cov(Y_{t+h},\bar A_t)   
\var(\bar A_t)^{-1} = \beta_{t,h} \left(  \frac{B\var (A_t^*)}{B^2 \var(A_t^*) + \var(\bar{V}_t)} \right).
\end{align*}
When $B=1$, this recovers the well-known consequence of linear regression with classical measurement error in the regressor: attenuation bias. However, if $B\neq 1$, even if $\var(\bar{V}_t)=0$, notice that bias may or may not be attenuating, and depends on the value of $B$.
\end{remark}

We now turn to a subtler case.  Assume the {\tt PS} is in $L^2$ and work under the {\tt SAM.BU-} condition, so $$\E[Y_{t+h}(a_t) \mid X_t] = \E[Y_{t+h} \mid A_t=a_t,X_t].$$ Then the linear projection of $Y_{t+h}$ on 1, $A_t$ and $X_t$ is  
$$
\operatorname{LP}[Y_{t+h} \mid 1,A_t=a_t,X_t] = \kappa_{t,h} + \beta_{t,h} a_t + \delta_{t,h}X_t,
$$
where 
\begin{align*}
\begin{pmatrix}
\beta_{t,h} \\
\delta_{t,h}
\end{pmatrix} = 
\cov(Y_{t+h},(A_t^\mathtt{T},X_t^\mathtt{T})^\mathtt{T})   
\begin{pmatrix}
\var(A_t) & \cov(A_t,X_t)\\
\cov(X_t,A_t)& \var(X_t)
\end{pmatrix}^{-1} 
\end{align*}
assuming $\var((A_t^\mathtt{T},X_t^\mathtt{T})^\mathtt{T})>0$.  
So the linear projection of the conditional average treatment effect ${\tt CATE}_{t,h}(a_t,a_t') = \E[Y_{t+h}(a_t)\mid X_t] - \E[Y_{t+h}(a_t')\mid X_t]$ (which is stochastic) is
$$
\operatorname{LP}[Y_{t+h} \mid 1,A_t=a_t,X_t] - \operatorname{LP}[Y_{t+h} \mid 1,A_t=a_t',X_t] = \beta_{t,h}(a_t - a_t')
$$
(which is non-stochastic, but the first two moments of $X_t$ influence $\beta_{t,h}$). 

Under covariance stationarity of $\{D_{t}\}$, one may write $\beta_{t,h}:=\beta_h$ for all $t$.  A weaker assumption is to assume $\{D_{t}\}$ is only locally covariance stationary, where the dependence through time changes slowly (e.g., \cite{Dahlhaus(12)}).

Covariance stationarity of $\{D_{t}\}$ is a sufficient condition to produce a time-invariant linear projected causal effect, but it is not necessary. By the Frisch-Waugh-Lovell theorem \citep{Yule(1907)}, we may also write
\[
\beta_{t,h} = \cov(Y_{t+h},A_t^\perp) \var(A_t^\perp)^{-1}
\]
for $A_t^\perp := A_t - \operatorname{LP}[A_t \mid 1, X_t]$. As such, we can see it is also sufficient to only require that $\{(Y_{t}, A_t^\perp)\}$ is covariance stationary to yield $\beta_{t,h}:=\beta_h$ for all $t$. Moreover, for any random $B_{t+h}$ such that $\cov(B_{t+h}, A_t^\perp)=0$, we have that
\[
\beta_{t,h} = \cov(Y_{t+h}-B_{t+h},A_t^\perp)\var(A_t^\perp)^{-1},
\]
i.e., regressing $Y_{t+h}-B_{t+h}$ on $A_t^\perp$ yields the same $\beta_{t,h}$ as regressing $Y_{t+h}$ on $A_t^\perp$. Certain choices of $B_{t+h}$ may help improve precision in downstream estimation, or it may be more plausible that $\{(Y_{t+h}-B_{t+h},A_t^\perp)\}$ is covariance stationary. An example of this is where $\{Y_{t}\}$ is an integrated variable but $\{B_{t}\}$ is a detrender or synthetic control.

\begin{remark}[Local projection and economics]
In the econometric time series literature the linear projection approach in the context of dynamic causal observational studies is associated with \cite{Jorda(05)} under the heading ``local projection.'' It is typically stated in the context of a covariance stationary time series. 
Work on inferential aspects of local projection includes \cite{PlagborgMollerWolf(21)}, \cite{OleaPlagborgMoller(21)} and \cite{AdamekSmeekesWilms(24)}.  
\end{remark}

\begin{remark}[Conditional linear projection and binary assignment]
If the {\tt PS} is covariance stationary, is in $L^2$, and $\mathcal{A}_t = \{0,1\}$, we can consider the conditional linear projection 
$$
\operatorname{CLP}[Y_{t+h}\mid 1,X_t;A_t=a_t] = \kappa^{(a_t)}_{h} + \beta^{(a_t)}_{h}X_t, 
$$ 
where $\beta^{(a_t)}_{h} = \cov((Y_{t+h},X_t)|A_t=a_t) \{\var(X_t|A_t=a_t)\}^{-1}$ and $\kappa^{(a_t)}_{h} = \E[Y_{t+h}|A_t=a_t] - \beta^{(a_t)}_{h} \E[X_{t}|A_t=a_t]$, which is also quite easy to estimate.  Note that $\operatorname{CLP}[Y_{t+h}|1,X_t;A_t=1] - \operatorname{CLP}[Y_{t+h}|1,X_t;A_t=0]$ can then be expressed as
\begin{align*}
(\kappa^{(1)}_{h} - \kappa^{(0)}_{h})+ (\beta^{(1)}_{h} - \beta^{(0)}_{h})X_t
&= \E[Y_{t+h}|A_t=1] - \E[Y_{t+h}|A_t=0] \\
&+ \beta^{(1)}_{h}(X_t - \E[X_t|A_t=1])
- \beta^{(0)}_{h}(X_t - \E[X_t|A_t=0]),
\end{align*}
a \textit{conditional} linear projection of the ${\tt CATE}_h(1,0)$ under {\tt SAM.BU-}. This conditional linear projection of the ${\tt CATE}_h(1,0)$ is stochastic, just like the ${\tt CATE}_h(1,0)$ itself, allowing for heterogeneity in the causal effect summary for different realized values of $X_t$ (which is ruled out in the unconditional linear projection considered earlier). 

This conditional linear projection approach generalizes in multiple ways, e.g.: (i) where $\mathcal{A}_t$ has a finite number of atoms, not just two, and (ii) computing $\kappa_h^{(a_t)}$ and $\beta_h^{(a_t)}$ by kernels applied to a continuous $a_t \in \mathcal{A}_t$ (though still imposing linearity in $X_t$).     
\end{remark}

\subsection{Causal summaries and strict stationarity}

Assume throughout this subsection strict stationarity of $\{D_t\}$ (so the dimensions of features, assignments and outcomes are time invariant) and the {\tt PS} is in $L^1$. Further assume $\mathcal{A}_t = \mathcal{A}$ for all $t$, i.e., the assignment space is not changing over time.  

Under the condition {\tt SAM.BR-} the 
$$
\E[Y_{t+h}(a)] = \E[Y_{t+h}|A_t=a] = \mu_h(a),
$$
where $\mu_h = \{\mu_h(a):a\in \mathcal{A}\}$ is a deterministic function. Thus the ${\tt ATE}_h(a,a') =\mu_h(a)-\mu_h(a')$. Typically, for strictly stationary processes, $\mu_h$ would be estimated as a nonparametric regression of $Y_{t+h}$ on $A_t$, e.g., through Nadaraya-Watson kernel regression, local linear regressions \citep{FanYao(05)}, splines, or neural networks.    

Under the weaker condition {\tt SAM.BU-} the 
$$
\E[Y_{t+h}(a)|X_t] = \E[Y_{t+h}|A_t=a,X_t] = \mu_h(a,X_t),
$$
where $\mu_h = \{\mu_h(a,x):a \in \mathcal{A},x \in \mathcal{X}\}$ is a deterministic function. Thus  
$$
{\tt CATE}_h(a,a') =\mu_h(a,X_t)-\mu_h(a',X_t)$$
which implies that, by iterated expectations   
$${\tt ATE}_h(a,a') =E_{X_1}[\mu_h(a,X_1)]-E_{X_1}[\mu_h(a',X_1)].
$$
The $\mu_h$ can be estimated by a non-parametric regression of $Y_{t+h}$ on $A_t$ and $X_t$.   

Under the condition {\tt SAM.BSU-} and imposing the {\tt PS} is $m$-order Markovian (see Section \ref{sect:order}), then 
$$
\E[Y_{t+h}(a)|X_t,D_{t-m:t-1}] = \E[Y_{t+h}|A_t=a,X_t,D_{t-m:t-1}]=\mu_h(a,X_t,D_{t-m:t-1}),
$$
where $\mu_h = \{\mu_h(a,x,d):a\in \mathcal{A},x \in \mathcal{X}, d \in \mathcal{D}^m\}$ is a deterministic function. Thus  
$$
{\tt CFTE}_h(a,a') =\mu_h(a,X_t,D_{t-m:t-1})-\mu_h(a',X_t,D_{t-m:t-1})$$
which implies that, by iterated expectations  
$${\tt ATE}_h(a,a') =E_{X_{m+1},D_{1:m}}[\mu_h(a,X_{m+1},D_{1:m})]-E_{X_{m+1},D_{1:m}}[\mu_h(a',X_{m+1},D_{1:m})]. 
$$
The $\mu_h$ can again be estimated by a non-parametric regression of $Y_{t+h}$ on $A_t$, $X_t$ and $D_{t-m:t-1}$. 

Under the condition {\tt SAM.BSR-} plus imposing the {\tt PS} is $m$-order Markovian, then 
$$
\E[Y_{t+h}(a)|D_{t-m:t-1}] = \E[Y_{t+h}|A_t=a,D_{t-m:t-1}]=\mu_h(a,D_{t-m:t-1}),
$$
where $\mu_h = \{\mu_h(a,d):a\in \mathcal{A},d \in \mathcal{D}^m\}$ is a deterministic function. Thus  
$$
{\tt FTE}_h(a,a') =\mu_h(a,D_{t-m:t-1})-\mu_h(a',D_{t-m:t-1})$$
which implies that, by iterated expectations  
$${\tt ATE}_h(a,a') =E_{D_{1:m}}[\mu_h(a,D_{1:m})]-E_{D_{1:m}}[\mu_h(a',D_{1:m})]. 
$$
Here $\mu_h$ can be estimated by a non-parametric regression of $Y_{t+h}$ on $A_t$ and $D_{t-m:t-1}$.

\subsection{Influence curves and double robustness}

Assume that the {\tt PS} is in $L^1$, that $\mathcal{A}_t$ is made up of a finite number of atoms, and that {\tt SAM.BSU-} holds. Define $\lambda_{a_t}(X_t,D_{1:t-1}):=P(A_{t}=a_t \mid X_t,D_{1:t-1})$, the propensity score, and assume it is bounded away from zero and one for all $a_t \in \mathcal{A}_t$. Recall that
\[
{\tt CFTE}_{t,h}(a_t,a_t') = \E[Y_{t+h}\mid X_t,D_{1:t-1},A_{t}=a_{t}] 
    - \E[Y_{t+h}\mid X_t,D_{1:t-1},A_{t}=a_{t}'].
\]
As such we have that
\[
{\tt ATE}_{t,h}(a_t,a_t') =\E[{\tt CFTE}_{t,h}(a_t,a_t')] = \E[\E[Y_{t+h}\mid X_t,D_{1:t-1},A_{t}=a_{t}]] 
    - \E[\E[Y_{t+h}\mid X_t,D_{1:t-1},A_{t}=a_{t}']].
\]
From the literature on semiparametric inference, we know that this is a classic missing data functional \citep{Kennedy(2024)}, for which the influence curve (or ``efficient influence function'') in a fully nonparametric observed-data model for $\left(Y_{t+h}, A_t, X_t, D_{1: t-1}\right)$ is
\begin{align*}
    \mathbb{IF}({\tt ATE}_{t,h}(a_t,a_t')) &= \E[Y_{t+h}\mid X_t,D_{1:t-1},A_{t}=a_{t}] + \frac{1(A_t = a_t)}{\lambda_{a_t}(X_t,D_{1:t-1})} \big\{Y_{t+h} -   \E[Y_{t+h}\mid X_t,D_{1:t-1},A_{t}=a_{t}]\big\} \\
    &- \E[Y_{t+h}\mid X_t,D_{1:t-1},A_{t}=a_{t}'] - \frac{1(A_t = a_t')}{\lambda_{a_t'}(X_t,D_{1:t-1})} \big\{Y_{t+h} -   \E[Y_{t+h}\mid X_t,D_{1:t-1},A_{t}=a_{t}']\big\} \\
    &- {\tt ATE}_{t,h}(a_t,a_t').
\end{align*}

Influence curves are random objects that can be used to construct semiparametric efficient estimators, e.g., the doubly robust estimators that are familiar from \cite{RobinsRotnitzkyZhao(94)} and \cite{HernanRobins(18)}, or the double/debiased machine learning estimators familiar from \cite{Chernozhukov(2018)}.  Semiparametric efficient inference on nonparametrically defined impulse response functions is explored in part in, e.g., \cite{BallinariWehrli(24)}, building from \cite{RambachanShephard(21)}, and can be grounded in the {\tt PS}. The desirable properties of influence function-based estimators are discussed in many works (e.g.,  \cite{RobinsRotnitzkyZhao(94)} or \cite{Chernozhukov(2018)}, or see \cite{Kennedy(2024)} for an overview).

\section{Extensions}\label{sec:ext}

\subsection{Instrumental variables and local causal effect summaries}\label{sec:iv}

The {\tt PS} also accommodates identification of local summaries of causal effects using instrumental variables, in the spirit of \cite{ImbensAngrist(94)}. 

\begin{definition}[Instrumental variables {\tt PS}]
Assume the triangular nonparametric SEM {\tt PS} from Example \ref{ex:npsem}. The instrumental variables (IV) {\tt PS} further assumes that
\begin{align*}
    \chi_t(y_{1:t-1},a_{1:t-1},x_{1:t-1},u_{t}) &=\chi_t(y_{1:t-1},a_{1:t-1},x_{1:t-1}',u_{t}), \\
    \alpha_t(y_{1:t-1},a_{1:t-1},x_{1:t-1},x_t,v_{t}) &=\alpha_t(y_{1:t-1},a_{1:t-1},x_{1:t-1}',x_t,v_{t}), \\
    \gamma_t(y_{1:t-1},a_{1:t-1},x_{1:t-1},x_t,a_t,w_{t})&= \gamma_t(y_{1:t-1},a_{1:t-1},x'_{1:t-1},x_t',a_t,w_{t}),
\end{align*}
for all $x_{1:t},x_{1:t}' \in \mathcal{X}_{1:t}, a_{1:t} \in \mathcal{A}_{1:t}, y_{1:t-1} \in \mathcal{Y}_{1:t-1}, w_t \in \mathcal{W}_t, u_t \in \mathcal{U}_t, v_t \in \mathcal{V}_t$ for all $t$.  We write them in shorthand as 
$$
\chi_t(y_{1:t-1},a_{1:t-1},u_{t}),\quad 
\alpha_t(y_{1:t-1},a_{1:t-1},x_t,v_{t}), \quad 
\gamma_t(y_{1:t-1},a_{1:t},w_{t}).
$$
\end{definition}

The definition of the IV {\tt PS} imposes further (exclusion) restrictions on the causal relationships of the variables in the {\tt PS}. Under the IV {\tt PS}, we may write
\begin{eqnarray*}
X_t(a_{1:t-1}) &=& \chi_t(\tilde D_{1:t-1}(a_{1:t-1}),U_{t}),  \\
A_t   &=& \alpha_t(\tilde D_{1:t-1}, X_t,V_t), \\
Y_{t}(a_{1:t}) &=&\gamma_t(\tilde D_{1:t-1}(a_{1:t-1}),a_t,W_{t}),
\end{eqnarray*}
where $\tilde D_{t}:=\left(A_{t}^{\mathtt{T}},Y_{t}^{\mathtt{T}}\right)^{\mathtt{T}}$ and $\tilde D_{1:t-1}(a_{1:t-1}) := \{\tilde D_1(a_1),...,\tilde D_{t-1}(a_{1:t-1})\}$ where $\tilde D_t(a_{1:t}) := \{a_t,Y_t(a_{1:t})\}$. Figure \ref{fig:swit_iv} depicts the instrumental variables {\tt PS} in a SWIT.

Under this system definition, we have that feature $X_t$ is a valid \textit{instrumental variable} for the time $t$ assignment conditional on  $\tilde D_{1:t-1}$, so long as (sufficiently) the $U_t \ind (V_t, W_t) \mid \tilde D_{1:t-1}$ and an instrument relevance condition holds. By further making a monotonicity assumption familiar from \cite{ImbensAngrist(94)}, letting $\mathcal{A}_t = \mathcal{X}_t = \{0,1\}$ for all $t$, and, for all $h$, defining $A_t(x_t):=\alpha_t(\tilde D_{1:t-1}, x_t,V_t)$, we can identify a local summary of the causal effect,
\[
E[Y_{t,h}(1) - Y_{t,h}(0) \mid \mathbf{1}\{A_t(1) > A_t(0) \}=1,  \tilde D_{1:t-1}].
\]
The event $\{A_t(1) > A_t(0)\}$ can be thought of as the single unit in the time series ``complying'' with the instrument.

\begin{restatable}{theorem}{thmiv}\label{thm:iv}
Assume an instrumental variables {\tt PS} where $\mathcal{A}_t = \mathcal{X}_t = \{0,1\}$ for all $t$. Further assume: 
\begin{enumerate}
    \item[(i)] $[U_t \ind (V_t,W_t)] \mid \tilde{D}_{1:t-1}$.
    \item[(ii)] $\E[A_{t} \mid X_t = 1, \tilde{D}_{1:t-1}] - \E[A_{t} \mid X_t = 0, \tilde{D}_{1:t-1}] > 0$ almost surely.
    \item[(iii)] $A_t(1) \geq A_t(0)$ almost surely.
\end{enumerate}
Then, almost surely, for any $h = 0,1, ..., H$,
\[
\E[Y_{t,h}(1) - Y_{t,h}(0) \mid \mathbf{1}\{A_t(1) > A_t(0) \}=1,  \tilde{D}_{1:t-1}] = \frac{\E[Y_{t+h} \mid X_t = 1, \tilde{D}_{1:t-1}] - \E[Y_{t+h} \mid X_t = 0, \tilde{D}_{1:t-1}]}{\E[A_{t} \mid X_t = 1, \tilde{D}_{1:t-1}] - \E[A_{t} \mid X_t = 0, \tilde{D}_{1:t-1}]}.
\]
\end{restatable}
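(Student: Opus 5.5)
The plan is to reproduce the Imbens--Angrist LATE argument conditionally on $\tilde D_{1:t-1}$, using potential branches indexed by the instrument value. First, define for $x\in\{0,1\}$ the "instrument-branch" outcome $Y_{t+h}(x) := Y_{t,h}(A_t(x))$, where $A_t(x)=\alpha_t(\tilde D_{1:t-1},x,V_t)$. By {\tt LP} and {\tt CP.2}, $Y_{t+h}=Y_{t,h}(A_t)$ and $A_t=A_t(X_t)$, so $Y_{t+h}=Y_{t+h}(X_t)$ (consistency). The IV exclusion restrictions matter here. $\gamma_t$ and all later $\chi_{t+j},\alpha_{t+j},\gamma_{t+j}$ depend on past features only through $\tilde D$, so $Y_{t,h}(a_t)$ is a deterministic function of $(\tilde D_{1:t-1}, a_t, W_t, U_{t+1:t+h},V_{t+1:t+h},W_{t+1:t+h})$. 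It does not involve $X_t$ or $U_t$, except possibly through future features, which enter only via later assignments and are themselves functions of $\tilde D$ and future noise. I would verify this by induction on $h$ using the recursive definition in {\tt CP.2}.

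Second, establish the key independence
\[
\big[X_t \ind \big(A_t(0),A_t(1),Y_{t,h}(0),Y_{t,h}(1)\big)\big] \mid \tilde D_{1:t-1}.
\]
Given $\tilde D_{1:t-1}$, $X_t$ is a function of $U_t$, while $A_t(x)$ is a function of $V_t$ and $Y_{t,h}(a)$ is a function of $W_t$ and future noise $\varepsilon_{t+1:t+h}$. Assumption (i) gives $U_t\ind (V_t,W_t)\mid\tilde D_{1:t-1}$. Independence of $\varepsilon$ over time and $\varepsilon_{t+j}\ind D_{1:t+j-1}$ from Example \ref{ex:npsem} should let me append the future noise via the contraction property. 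This step is the main obstacle: the Example's assumptions are stated relative to $D_{1:t-1}$ rather than $\tilde D_{1:t-1}$, so I must check that conditioning on the coarser $\tilde D_{1:t-1}$ preserves the required independence of $(U_t,V_t,W_t)$ from $\varepsilon_{t+1:t+h}$. Mutual independence of the $\varepsilon$'s over time, together with $\tilde D_{1:t-1}$ being a function of $\varepsilon_{1:t-1}$, should suffice.

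Third, the standard computation. Conditioning on $\tilde D_{1:t-1}$ throughout and using the independence,
\[
\E[Y_{t+h}\mid X_t=x,\tilde D_{1:t-1}]=\E[Y_{t,h}(A_t(x))\mid\tilde D_{1:t-1}],\qquad \E[A_t\mid X_t=x,\tilde D_{1:t-1}]=\E[A_t(x)\mid \tilde D_{1:t-1}].
\]
Since assignments are binary, $Y_{t,h}(A_t(x)) = Y_{t,h}(0)+A_t(x)\{Y_{t,h}(1)-Y_{t,h}(0)\}$. The numerator therefore equals $\E[(A_t(1)-A_t(0))\{Y_{t,h}(1)-Y_{t,h}(0)\}\mid\tilde D_{1:t-1}]$. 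By monotonicity (iii), $A_t(1)-A_t(0)=\mathbf 1\{A_t(1)>A_t(0)\}$, so the numerator is $\E[\{Y_{t,h}(1)-Y_{t,h}(0)\}\mathbf 1\{\cdot\}\mid\tilde D_{1:t-1}]$ and the denominator is $P(A_t(1)>A_t(0)\mid\tilde D_{1:t-1})$. The denominator is positive a.s. by (ii). Dividing gives the conditional expectation on the complier event, using $L^1$ integrability of the {\tt PS} to justify each expectation.
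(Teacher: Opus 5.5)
Your proposal is correct and follows essentially the same route as the paper's proof. Like the paper, you obtain $X_t \ind [A_t(1),A_t(0),Y_{t,h}(1),Y_{t,h}(0)] \mid \tilde D_{1:t-1}$ by showing that, under the exclusion restrictions, $Y_{t,h}(a_t)$ depends only on $(\tilde D_{1:t-1},W_t,\varepsilon_{t+1:t+h})$, then combining (i) with the independence of the future $\varepsilon$'s via contraction, and finishing with the Imbens--Angrist algebra under monotonicity and (ii); your resolution of the $D_{1:t-1}$ versus $\tilde D_{1:t-1}$ concern, namely $\varepsilon_{t+1:t+h}\ind(\tilde D_{1:t-1},U_t,V_t,W_t)$, is exactly the fact the paper invokes.
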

\begin{proof}
See the Appendix.
\end{proof}

The first condition in Theorem \ref{thm:iv}, in conjunction with the exclusion restrictions imposed by the IV {\tt PS}, grants that
\[
X_t \ind \big[A_t(1),A_t(0), Y_{t,h}(1), Y_{t,h}(0)\big] \mid \tilde D_{1:t-1}.
\]
The second and third conditions of Theorem \ref{thm:iv} mirror the relevance and monotonicity assumptions, respectively, introduced in \cite{ImbensAngrist(94)}.

The empirical setting represented by the IV {\tt PS} may be relevant to, e.g., a health system that wants to know the causal effect of ingesting a drug on a patient's health over time, but can only randomly encourage the patient to do so with a text reminder; or a ride-share application company that wants to understand the causal effect of augmenting some aspect of city-wide driver behavior on app engagement, but can only provide that city's drivers with randomized incentives to encourage desired behavior at scale.

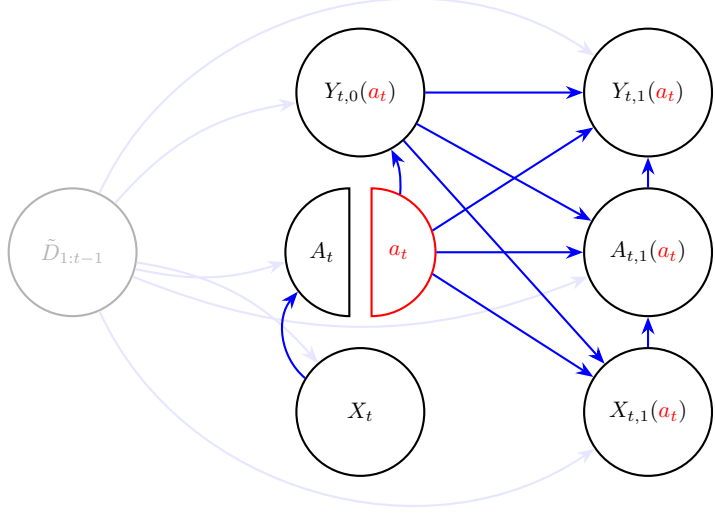
\begin{figure}[h!]
    \centering

\begin{tikzpicture}[scale=0.85, transform shape,
    normalnode/.style={
        draw,
        circle,
        thick,                       
        minimum size=2cm,            
        inner sep=0pt,
        align=center
    },
    xnode/.style={
        normalnode,
    },
    semicircle/.style={
        draw,
        shape=semicircle,
        thick,                      
        minimum width=2cm,          
        minimum height=1cm,         
        inner sep=0pt
    },
    redsemicircle/.style={
        draw,
        shape=semicircle,
        thick,                      
        minimum width=2cm,          
        minimum height=1cm,         
        inner sep=0pt,
        red
    },
    arrow/.style={
        ->,
        thick,
        blue,
        >=Stealth
    }
]

\node[normalnode] (yt) at (0, 2.5) {$Y_{t,0}(\textcolor{red}{a_t})$};
\node[normalnode] (ytp1) at (4.5, 2.5) {$Y_{t,1}(\textcolor{red}{a_t})$};
\node[xnode] (xt) at (0, -2.5) {$X_t$};
\node[xnode] (xtp1) at (4.5, -2.5) {$X_{t,1}(\textcolor{red}{a_t})$};

\node[normalnode] (atp1) at (4.5, 0) {$A_{t,1}(\textcolor{red}{a_t})$};

\node [
    draw, 
    circle, 
    color=black!30,
    minimum size=2cm, 
    inner sep=0pt, 
    thick
] (Q) at (-4.5, 0) {$\tilde{D}_{1:t-1}$};


\def\hshift{0.6} 

\node[semicircle, rotate=90] (at_left) at (-\hshift, 0) {\rotatebox{-90}{$A_t$}};
\node[redsemicircle, rotate=-90] (at_right) at (\hshift, 0) {\rotatebox{90}{$\textcolor{red}{a_t}$}};


\draw[arrow, color=blue!10, bend left=20] (Q) to (yt);
\draw[arrow, color=blue!10, bend left=50] (Q) to (ytp1);
\draw[arrow, color=blue!10, bend right=15] (Q) to (at_left);
\draw[arrow, color=blue!10, bend right=22] (Q) to (atp1);
\draw[arrow, color=blue!10, bend left=20] (Q) to (xt);
\draw[arrow, color=blue!10, bend right=50] (Q) to (xtp1);


\draw[arrow, bend right=15] (at_right) to (yt);
\draw[arrow] (at_right) -- (ytp1);
\draw[arrow] (at_right) -- (xtp1);
\draw[arrow] (at_right) -- (atp1);

\draw[arrow] (yt) -- (ytp1);
\draw[arrow] (yt) -- (atp1);
\draw[arrow] (yt) -- (xtp1);

\draw[arrow] (atp1) -- (ytp1);

\draw[arrow, bend left=45] (xt) to (at_left);
\draw[arrow] (xtp1) to (atp1);

\end{tikzpicture}

\caption{The IV {\tt PS} drawn as a ``Single World Intervention Template'' (SWIT) where $X_t$ is an instrumental variable.}
    \label{fig:swit_iv}
\end{figure}

\subsection{Multi-period assignments}\label{sec:mpa}

The same primitives in the {\tt PS} can be used to define potential branches based on many consecutive periods of intervention. We may call these objects $s$-potential branches, and define them in the following alternative to assumption {\tt CP.2}. Naturally, the $0$-potential branch is a potential branch, and so potential branches are a special case of $s$-potential branches. 
\begin{assumption}[{\tt CP.2$^\prime$}] Write the ``$s$-potential branch'' for some $s \geq 0$ at time $t+h$ as $$D_{t,h}(a_{t:t+s}):=\{X_{t,h}(a_{t:t+s}),A_{t,h}(a_{t:t+s}),Y_{t,h}(a_{t:t+s})\},\quad h=0,1,...,H$$ a system counterfactual.  It  corresponds to the assignments at times $t$ through $t+s$ being set to $\{a_{t}, a_{t+1}, \dots, a_{t+s}\}$ and recording the system at horizon $h$ periods later. 
Assume the branch assignment at horizon $h$ is 
$$
A_{t, h}\left(a_{t:t+s}\right):= \begin{cases}
\begin{array}{lll}
     a_{t+h}, & h \leq s \\
     \alpha_{t+h}\left(D_{1:t-1},D_{t,0:h-1}\left(a_{t:t+s}\right), X_{t,h}(a_{t:t+s}),V_{t+h}\right), & h > s.
\end{array}
\end{cases} 
$$
Assume the $s$-potential branch outcome and branch feature at horizon $h$ are
$$
Z_{t,h}(a_{t:t+s}):= \begin{pmatrix}
    X_{t,h}(a_{t:t+s}) \\ Y_{t,h}(a_{t:t+s})
\end{pmatrix} := \begin{cases}
\begin{array}{lll}
     \{X_{t}^\mathtt{T},Y_{t}(A_{1:t-1}, a_{t})^\mathtt{T}\}^\mathtt{T} , & h=0 \\
     \{X_{t+h}(A_{1:t-1}, A_{t,0:h-1}(a_{t:t+s}))^\mathtt{T}, Y_{t+h}(A_{1:t-1}, A_{t,0:h}(a_{t:t+s}))^\mathtt{T}\}^\mathtt{T}, & h > 0.
\end{array}
\end{cases} 
$$
\end{assumption}

Under {\tt CP.2$^\prime$}, analyzing dynamic causal effects in the setting of Example \ref{ex:hlmps}, we see that $D_{t,0}(a_{t:t+s}) = D_{t,0}(a_{t})$ and then recursively, for $h \leq s$, the 
    \[
D_{t,h}(a_{t:t+s})=\left( 
\begin{array}{c}
X_{t,h}(a_{t:t+s}) \\
A_{t,h}(a_{t:t+s}) \\
Y_{t,h}(a_{t:t+s})%
\end{array}%
\right) =\left( 
\begin{array}{c}
\chi_1 \\
0 \\
\gamma_1 + \gamma_{0,X} \chi_1  
\end{array}%
\right) D_{t,h-1}(a_{t:t+s})+\left( 
\begin{array}{c}
0 \\ 
I \\ 
\gamma_{0,A} 
\end{array}%
\right) a_{t+h}+\left( 
\begin{array}{ccc}
\Delta  & 0 & 0\\ 
0 & 0 & 0\\ 
\gamma_{0,X} \Delta & 0& \Omega%
\end{array}%
\right) \varepsilon _{t+h}, 
\]
while for $h>s$,
\[
D_{t,h}(a_{t:t+s}) = \phi D_{t,h-1}(a_{t:t+s}) + B \varepsilon_{t+h}.
\]
Then the dynamic causal effect is again non-stochastic.  For $h \le s$, it is determined by the recursion 
$$
D_{t,h}(a_{t:t+s}) - D_{t,h}(a_{t:t+s}')
= \left( 
\begin{array}{c}
\chi_1 \\
0 \\
\gamma_1 + \gamma_{0,X} \chi_1  
\end{array}%
\right) \{D_{t,h-1}(a_{t:t+s}) - D_{t,h-1}(a_{t:t+s}')\} 
+ \left( 
\begin{array}{c}
0 \\ 
I \\ 
\gamma_{0,A} 
\end{array}%
\right) (a_{t+h} - a_{t+h}')
$$
and then, for $h>s$,   
$$D_{t,h}(a_{t:t+s}) - D_{t,h}(a_{t:t+s}')  = \phi 
\{D_{t,h-1}(a_{t:t+s}) - D_{t,h-1}(a_{t:t+s}')\}.
$$

\subsection{Design-based causal inference}\label{sec:dbci}

Design-based inference is extremely influential in randomized control trials and observational studies (e.g., \cite{Fisher(25),Fisher(35)} and \cite{ImbensRubin(15)}). In these settings, researchers  choose to condition on the potential outcomes.  The importance of the design-based approach in panel data is highlighted in, for example, \cite{ArkhangelskyImbens(24)}.

In time series this design-based approach was introduced by \cite{BojinovShephard(19)} in their simpler setting with no features and randomized assignments. They condition on all the potential outcomes
$$
Y_{1:T}(\mathcal{A}_{1:T}):= \{Y_{1:T}(a_{1:T}):a_{1:T}\in \mathcal{A}_{1:T}\},
$$
where, generically, we write $Z_{s:t}(A_{1:s-1},\mathcal{A}_{s:t}) = \{Z_{s:t}(A_{1:s-1},a_{s:t}):a_{s:t}\in \mathcal{A}_{s:t}\}$.  
One of the attractions of the design-based approach is that it allows some forms of causal inference without specifying a detailed model for the outcomes.  This is very compelling, as time series has no direct form of replication. \cite{LinDing(25)} further develop time series design-based studies, relating it to regression.  

A key object of interest in design-based causal inference is the distribution of assignments conditional on the potential outcomes. This law is naturally called the ``assignment mechanism,'' taken from the cross-sectional literature.   

\begin{definition}[Assignment mechanism]\label{defn:alaw} Assume a {\tt PS}. The assignment mechanism ({\tt AM}) is the law of 
$
A_{1:T} \mid Z_{1:T}(\mathcal{A}_{1:T})
$.
\end{definition}

Our goal is to sample from the {\tt AM} under a particular null hypothesis in order to perform design-based (causal) inference. Towards this goal, we state the following theorem.  

\begin{restatable}{theorem}{thmadecomp}\label{thm:adecomp}
Assume the {\tt PS} from Example \ref{ex:npsem} and that $V_t \ind (U_t, W_t, \varepsilon_{1:t-1}) \mid D_{1:t-1}$ for all $t$. Then the joint law of $
A_{1:T},Z_{1:T}(\mathcal{A}_{1:T}),
$ is determined by the sequence
$$
Z_t(\mathcal{A}_{1:t}) \mid Z_{1:t-1}(\mathcal{A}_{1:t-1})\quad \text{and} \quad 
A_t \mid [A_{1:t-1},Z_{1:t}(A_{1:t-1},\mathcal{A}_t)],
$$
where $t=1,...,T$. 
\end{restatable}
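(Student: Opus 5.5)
The plan is to factor the joint law of $\{A_{1:T}, Z_{1:T}(\mathcal{A}_{1:T})\}$ sequentially in time, interleaving counterfactual blocks and assignments in the order
$$
Z_1(\mathcal{A}_1),\ A_1,\ Z_2(\mathcal{A}_{1:2}),\ A_2,\ \dots,\ Z_T(\mathcal{A}_{1:T}),\ A_T,
$$
and to show that each factor reduces to one of the two stated conditionals. By the chain rule for (regular) conditional distributions, the joint law equals the product over $t$ of the law of $Z_t(\mathcal{A}_{1:t})$ given $\{Z_{1:t-1}(\mathcal{A}_{1:t-1}),A_{1:t-1}\}$, times the law of $A_t$ given $\{Z_{1:t}(\mathcal{A}_{1:t}),A_{1:t-1}\}$. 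So there are two reductions to establish. First, that $A_{1:t-1}$ can be dropped from the first factor. Second, that in the second factor the full table $Z_{1:t}(\mathcal{A}_{1:t})$ can be replaced by the realized-history object $Z_{1:t}(A_{1:t-1},\mathcal{A}_t)$.

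\textbf{First reduction.} Under Example \ref{ex:npsem}, the whole potential-outcome table $Z_{1:t}(\mathcal{A}_{1:t})$ is a deterministic function of $(U_{1:t},W_{1:t})$, obtained by recursively plugging into $\chi_s,\gamma_s$ for every path $a_{1:s}$. Meanwhile $A_{1:t-1}$ is a function of $(U_{1:t-1},W_{1:t-1},V_{1:t-1})$. The $\varepsilon_s$ are independent over time, but $U_s,V_s,W_s$ may be dependent within a period. So I would use the hypothesis $V_s \ind (U_s,W_s,\varepsilon_{1:s-1})\mid D_{1:s-1}$ inductively to show
$$
\big[Z_t(\mathcal{A}_{1:t}) \ind A_{1:t-1}\big] \mid Z_{1:t-1}(\mathcal{A}_{1:t-1}).
$$
The key point is that $(U_t,W_t)$ is independent of $\varepsilon_{1:t-1}$. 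Given the past table, $A_{1:t-1}$ carries information only through $V_{1:t-1}$, and $V_{1:t-1}$ is conditionally independent of future $(U,W)$ given the data, hence given the table together with the $V$'s already used. This is the step I expect to be the main obstacle. The conditioning sets differ: the hypothesis conditions on $D_{1:s-1}$, which is a function of the table and $A_{1:s-1}$, not on the table itself. To handle this I would proceed by induction on $t$, using the contraction and weak-union properties of conditional independence (as in the remark following Theorem \ref{thm:brancseq}), noting that $D_{1:s-1}$ is measurable with respect to $\{Z_{1:s-1}(\mathcal{A}_{1:s-1}),A_{1:s-1}\}$. If this inductive independence is too delicate, my fallback is to state the first factor as conditional on $\{Z_{1:t-1}(\mathcal{A}_{1:t-1}),A_{1:t-1}\}$ and argue that this is still ``determined by'' the first displayed sequence, since $A_{1:t-1}$ is itself generated by the second sequence.

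\textbf{Second reduction.} Write $A_t=\alpha_t(D_{1:t-1},X_t,V_t)$. By {\tt LP}, $D_{1:t-1}$ and $X_t$ are functions of $A_{1:t-1}$ and $Z_{1:t}(A_{1:t-1},\mathcal{A}_t)$; indeed $X_t=X_t(A_{1:t-1})$ by {\tt CP.1b}. It therefore suffices to show that $V_t$ is conditionally independent of the remaining parts of the table, given $D_{1:t-1}$, $A_{1:t-1}$ and $Z_{1:t}(A_{1:t-1},\mathcal{A}_t)$. The whole table $Z_{1:t}(\mathcal{A}_{1:t})$ is a function of $(U_{1:t},W_{1:t})$, which in turn is a function of $(\varepsilon_{1:t-1},U_t,W_t)$. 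The hypothesis $V_t\ind(U_t,W_t,\varepsilon_{1:t-1})\mid D_{1:t-1}$ then gives $V_t\ind Z_{1:t}(\mathcal{A}_{1:t})\mid D_{1:t-1}$. Weak union lets us add the measurable quantities $A_{1:t-1}$ and $Z_{1:t}(A_{1:t-1},\mathcal{A}_t)$ to the conditioning set. This yields that the law of $A_t$ given $\{A_{1:t-1},Z_{1:t}(\mathcal{A}_{1:t})\}$ equals its law given $\{A_{1:t-1},Z_{1:t}(A_{1:t-1},\mathcal{A}_t)\}$.

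Multiplying the factors from both reductions completes the argument.
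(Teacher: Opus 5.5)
Your overall architecture is the paper's: an interleaved chain-rule factorization followed by two reductions. Your second reduction is also the paper's argument, stated more cleanly. Decomposition gives $V_t \ind Z_{1:t}(\mathcal{A}_{1:t}) \mid D_{1:t-1}$. Since $A_{1:t-1}$ and $Z_{1:t}(A_{1:t-1},\mathcal{A}_t)$ are functions of the table and $D_{1:t-1}$, weak union finishes it.

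The first reduction, however, is not closed. You flag it as the main obstacle and sketch an induction through the hypothesis on $V_s$ without carrying it out. Your fallback does not prove the theorem. If the law of $Z_t(\mathcal{A}_{1:t})$ given $\{Z_{1:t-1}(\mathcal{A}_{1:t-1}),A_{1:t-1}\}$ genuinely depended on $A_{1:t-1}$, the joint law could not be recovered from the kernels $Z_t(\mathcal{A}_{1:t}) \mid Z_{1:t-1}(\mathcal{A}_{1:t-1})$ in the statement. That conditional independence is exactly what has to be shown. The fact you recorded, that the table is a function of $(U_{1:t},W_{1:t})$, is weaker than what the SEM gives. With only that fact you would indeed have to rule out $A_{1:t-1}$ carrying information about $(U_{1:t-1},W_{1:t-1})$ beyond the past table, which is why you reached for the hypothesis on $V$.

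The missing observation is the recursion. Since $D_{1:t-1}(a_{1:t-1})$ is a function of $a_{1:t-1}$ and $Z_{1:t-1}(a_{1:t-1})$, Example \ref{ex:npsem} gives $Z_t(\mathcal{A}_{1:t}) = g_t\{Z_{1:t-1}(\mathcal{A}_{1:t-1}),U_t,W_t\}$ for a fixed function $g_t$. Both $A_{1:t-1}$ and $Z_{1:t-1}(\mathcal{A}_{1:t-1})$ are functions of $\varepsilon_{1:t-1}$, and $(U_t,W_t) \ind \varepsilon_{1:t-1}$ by time-independence of the errors. So conditional on $\{A_{1:t-1}, Z_{1:t-1}(\mathcal{A}_{1:t-1})=z\}$, the table $Z_t(\mathcal{A}_{1:t})$ is distributed as $g_t(z,U_t,W_t)$, free of $A_{1:t-1}$. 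This is the paper's one-line argument; the hypothesis on $V$ and any induction are unnecessary for this step.

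Your $V$-based intuition can also be closed directly. The conditioning set $D_{1:s-1}$ is $\varepsilon_{1:s-1}$-measurable and $V_s \ind \varepsilon_{1:s-1}$. So, combined with time-independence, the hypothesis yields $V_s \ind (U_s,W_s,\varepsilon_{1:s-1})$ unconditionally. Hence $V_{1:T} \ind (U_{1:T},W_{1:T})$, and so $V_{1:t-1} \ind Z_{1:t}(\mathcal{A}_{1:t})$. Since $A_{1:t-1}$ is a function of $\{Z_{1:t-1}(\mathcal{A}_{1:t-1}),V_{1:t-1}\}$, weak union gives the required conditional independence without using the recursion.
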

\begin{proof} See the Appendix.\end{proof}

Note that the joint law of $A_{1:T},Z_{1:T}(\mathcal{A}_{1:T})$ is determined by the marginal law of $Z_{1:T}(\mathcal{A}_{1:T})$ and the {\tt AM}. From Theorem \ref{thm:adecomp} and the time series prediction decomposition, the law of $Z_{1:T}(\mathcal{A}_{1:T})$ is entirely determined by the sequence $Z_t(\mathcal{A}_{1:t}) \mid Z_{1:t-1}(\mathcal{A}_{1:t-1})$, and thus the {\tt AM} is solely determined by the sequence $A_t \mid [A_{1:t-1},Z_{1:t}(A_{1:t-1},\mathcal{A}_t)]$. Notice this is close to the {\tt SAM}, but differs as {\tt AM} additionally conditions on $Y_t(A_{1:t-1},\mathcal{A}_t)$.  This observation then motivates the following infeasible theorem.  It uses the assumption that the features are {\tt PS}-exogenous, for all $t$ and $h$, which implies that 
$
X_{1:T}(a_{1:T}) = X_{1:T}$ for all $a_{1:T} \in \mathcal{A}_{1:T}.$

\begin{restatable}{theorem}{thmsampling}
Assume the {\tt PS} from Example \ref{ex:npsem}, that $V_t \ind (U_t, W_t, \varepsilon_{1:t-1})\mid D_{1:t-1}$, that the features are {\tt PS}-exogenous, and that for a sequence of assignments $A^*_{1:T}$ the 
\begin{equation*}
\big[Y_{t}(A^*_{1:t-1},\mathcal{A}_{t}) \ind A_t^*
\big] \mid A^*_{1:t-1},X_{1:t},Y_{1:t-1}(A^*_{1:t-1}).
\end{equation*}  
Then simulating $A_{1:T}^*$ recursively through the conditional law 
\begin{equation}\label{eqn:A*1}
A^*_t \mid [A^*_{1:t-1},X_{1:t},Y_{1:t-1}(A^*_{1:t-1})],\quad t=1,...,T,
\end{equation}
the resulting  
$A_{1:T}^*$ is a draw from the {\tt AM}. 
\end{restatable}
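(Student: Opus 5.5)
The plan is to build the proof on the factorization in Theorem \ref{thm:adecomp}. That theorem says the {\tt AM}, the law of $A_{1:T}\mid Z_{1:T}(\mathcal{A}_{1:T})$, is pinned down by the sequence of conditionals
$$A_t \mid [A_{1:t-1},Z_{1:t}(A_{1:t-1},\mathcal{A}_t)].$$
Concretely, $P(A_{1:T}\in\cdot\mid Z_{1:T}(\mathcal{A}_{1:T}))$ factorizes as a product over $t$ of these one-step laws. Each step conditions only on the potential paths along the realized assignment history. It is therefore enough to show that, for every $t$, the conditional law of $A_t^*$ given $[A^*_{1:t-1},Z_{1:t}(A^*_{1:t-1},\mathcal{A}_t)]$ coincides with the law (\ref{eqn:A*1}) used in the simulation. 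Chaining these one-step equalities over $t=1,\dots,T$ then shows that the recursively simulated $A^*_{1:T}$ has the {\tt AM} law. This is the same prediction-decomposition argument used after Theorem \ref{thm:adecomp}.

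The first step simplifies the conditioning set using {\tt PS}-exogeneity of the features. This gives $X_{1:t}(a_{1:t-1})=X_{1:t}$ for every path, so
$$Z_{1:t}(A^*_{1:t-1},\mathcal{A}_t)=\{X_{1:t},\,Y_{1:t-1}(A^*_{1:t-1}),\,Y_t(A^*_{1:t-1},\mathcal{A}_t)\}.$$
Non-anticipation ({\tt CP.1a}) is what makes $Y_{s}(A^*_{1:t-1},a_t)$ for $s<t$ equal to $Y_s(A^*_{1:s})$, independent of $a_t$. The conditioning set in the {\tt AM} step is therefore the simulation's conditioning set
$$\mathcal{F}_t:=[A^*_{1:t-1},X_{1:t},Y_{1:t-1}(A^*_{1:t-1})],$$
augmented by the extra object $Y_t(A^*_{1:t-1},\mathcal{A}_t)$.

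The second step applies the maintained conditional independence $[Y_t(A^*_{1:t-1},\mathcal{A}_t)\ind A_t^*]\mid\mathcal{F}_t$. This yields
$$A^*_t\mid[\mathcal{F}_t,Y_t(A^*_{1:t-1},\mathcal{A}_t)] \;\overset{d}{=}\; A^*_t\mid\mathcal{F}_t,$$
which is exactly (\ref{eqn:A*1}). Here I would also check that this law agrees with the genuine {\tt AM} step for the actual $A_t$. Under the assumption $V_t\ind(U_t,W_t,\varepsilon_{1:t-1})\mid D_{1:t-1}$ and $A_t=\alpha_t(D_{1:t-1},X_t,V_t)$, the actual assignment satisfies the analogous independence. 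The reason is that $Y_t(A_{1:t-1},\mathcal{A}_t)$ is a function of $(D_{1:t-1},X_t,W_t)$ through $\gamma_t$. Its conditional law given $\mathcal{F}_t$ evaluated on the actual history is therefore the {\tt SAM} law $\alpha_t$ pushed forward through $V_t\mid D_{1:t-1},X_t$. Once the potential-outcome arguments are evaluated at the realized path, $\mathcal{F}_t$ is just $D_{1:t-1},X_t$ via {\tt LP}. So the simulated kernel and the {\tt AM} kernel are the same measurable map of the history.

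The main obstacle is the bookkeeping in this last identification. The simulated $A^*_{1:t-1}$ differs from the realized $A_{1:t-1}$, so one must check that the kernel (\ref{eqn:A*1}) is a fixed function of its arguments, the same for real and simulated histories. I would handle this by writing both as the same regular conditional distribution $\kappa_t(\cdot\mid a_{1:t-1},x_{1:t},y_{1:t-1}(a_{1:t-1}))$, evaluating it at $a_{1:t-1}=A^*_{1:t-1}$, and invoking Theorem \ref{thm:adecomp} to glue the kernels via induction on $t$.
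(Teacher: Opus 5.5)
Your proposal is correct and follows essentially the paper's own argument. You invoke the factorization of Theorem~\ref{thm:adecomp}, use {\tt PS}-exogeneity (with non-anticipation) to reduce $Z_{1:t}(A^*_{1:t-1},\mathcal{A}_t)$ to $\{X_{1:t},Y_{1:t-1}(A^*_{1:t-1}),Y_t(A^*_{1:t-1},\mathcal{A}_t)\}$, and then drop $Y_t(A^*_{1:t-1},\mathcal{A}_t)$ via the maintained conditional independence. Your extra check, that the resulting kernel is the {\tt SAM} kernel evaluated at the simulated history, is bookkeeping that the paper's two-line proof leaves implicit, and it is consistent with that proof.
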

\begin{proof} See the Appendix.\end{proof}

Now the equation (\ref{eqn:A*1}) has the same structure as the sequential assignment mechanism, but for the simulated path of the assignment.  However, the law of (\ref{eqn:A*1}) is still infeasible to sample from as the $Y_{1:t-1}(A^*_{1:t-1})$ are counterfactuals we do not see: we only see the outcomes $Y_{1:t-1} = Y_{1:t-1}(A_{1:t-1})$. 

To make sampling feasible, we define the null composite hypothesis 
$$
H_0: \ 
Y_{t}(a_{1:t}) = Y_{t}(a_{1:t}') + g_t(a_{1:t},a_{1:t}';\theta),\ \forall \ a_{1:t},a_{1:t}' \in \mathcal{A}_{1:t}, \ t=1,...,T,  
$$
where $g_t$ is a deterministic, known, parameterized function where $\theta \in \Theta$.  Under the null, the
$$
Y_{t}(A^*_{1:t}) = Y_{t} + g_t(A^*_{1:t},A_{1:t};\theta),\quad t=1,...,T.
$$
Hence in the setting of Example \ref{ex:npsem}, with the features being {\tt PS}-exogenous, knowledge of the {\tt SAM} allows simulating $A^*_{1:T}$ from the {\tt AM} under the null.

\begin{example}[Homogeneous, linear causal effects]\label{ex:causalPQ} Assume a {\tt PS}, the features are {\tt PS}-exogenous and impose a parametric null hypothesis that 
    $$
    g_t(a_{1:t},a_{1:t}') =  
     \sum_{j=0}^Q\psi_j (a_{t-j} - a_{t-j}')
     + \sum_{j=1}^P \vartheta_j g_{t-j}(a_{1:t-j},a_{1:t-j}'), \quad a_{1:t},a_{1:t}'\in \mathcal{A}_{1:t}, \quad t>Q\ge 0, P\ge 0,
    $$
    so the causal effect is homogeneous and linear in contemporaneous and past assignments with parameters $\theta=(\psi_{0:Q}^{\mathtt{T}},\vartheta_{1:P}^{\mathtt{T}})^{\mathtt{T}}$.  When $\theta=0$ then this is the time series Fisher-type sharp null of no causal effect used by \cite{BojinovShephard(19)}, but now extended to the case of features and causal dynamics.  
 \end{example}
 
    The advantage of this design-based approach is that, for any value of $\theta$, we can simulate under the null $B$ copies of the triple 
    $$
    D_{1:T}^* = \{X_{1:T}^{\mathtt{T}},A^{*\mathtt{T}}_{1:T},Y_{1:T}(A_{1:T}^*)^{\mathtt{T}}\}^{\mathtt{T}},
    $$ 
    which can then be compared to the actual data $D_{1:T}.$  The comparison is made through a low dimensional statistic
    $$
    T(D_{1:T}^*)
    $$
    designed by the researcher, comparing it to  $T(D_{1:T})$: rejecting the null if $T(D_{1:T})$ is large compared to the $B$  simulated versions of $T(D_{1:T}^*)$.  Consequently we can find an exact confidence region $\mathcal{C}$ for $\theta$ (and so in turn causal effects) by inverting the distribution of the test, so that   
    $
    P(\theta \in \mathcal{C}) = 1-\alpha,
    $
    where $\alpha\in (0,1)$ is selected by the researcher.  
    
    A significant virtue of this approach is that we are parametrically modeling the causal effects, but are entirely agnostic about the underlying dynamics of the outcomes.  A downside is that it requires the ability to simulate from {\tt SAM} through (\ref{eqn:A*1}), the law of 
    $$
    A_t^* \mid [A_{1:t-1}^*,X_{1:t},\{Y_{1:t-1}=y_{1:t-1} + g_{1:t-1}(A_{1:t-1}^*,A_{1:t-1})\}].
    $$
    For experimental data this should be known.  
    For observational data, the {\tt SAM} can be estimated from the data, using Assumption {\tt CP.2} to extrapolate from the assignments based on lagged observables (that is Assumption {\tt DGP.2}) to assignments based on lagged counterfactuals. In the important special case of the news impact {\tt PS} this difficulty entirely disappears as $A_t$ is an i.i.d. sequence.

    \begin{example} Suppose there are no confounders, the {\tt PS} is Markovian, the assignments are binary and under the null hypothesis the causal effects follow Example \ref{ex:causalPQ} with $Q=1$ and $P=0$.  Then $a_t \in \{0,1\}$, while 
    $$
    p^*_t(a_t):= P(A_t^*=a_t\mid A_{t-1}^*,Y_{t-1}(A_{t-2:t-1}^*)),
    \quad 
    Y_t(a_{1:t}) = Y_t + \psi_0 (a_t - A_t) + \psi_1 (a_{t-1}-A_{t-1}),
    $$
    so $\theta =\psi_{0:1}$.  
    If $p^*_t=\{p_t^*(a):a \in \{0,1\} \}$ is known, this allows us to calculate $Y_{1:T}(A_{1:T}^*)$ and thus simulate the entire path $$A_{1:T}^*,Y_{1:T}(A_{1:T}^*),$$ with the inputs $\theta$, $Y_{1:T}$ and $A_{1:T}$. Define the inverse probability weighted test statistic 
    \begin{align*}
    {T}(A_{1:t}^*) &:= \frac{1(A_t^*=1) Y_t(A_{1:t-1}^*,1)}{p^*_t(1)}
    - \frac{1(A_t^*=0) Y_t(A_{1:t-1}^*,0)}{p^*_t(0)}.
    \end{align*}
    Then
    \begin{align*}
    \E[{T}(A_{1:t}^*)|Y_{1:T}(\mathcal{A}_{1:T}),A^*_{1:t-1}] &= Y_t(A^*_{1:t-1},1)-Y_t(A^*_{1:t-1},0) = \psi_0.
    \end{align*}
    We compare the observed ${T}(A_{1:t})$ with $B$ independent simulated (under $\theta$) versions.  
    \end{example}

\begin{example}\label{ex:newsdesign}
Assume a news impact \texttt{PS} with 
\begin{align*}
A_{t} &=V_{t}, \\
Y_{t}(a_{1:t}) &=\xi_{t}+f(a_{1:t}),
\end{align*}%
where (i) $V_{t}$ is i.i.d.; (ii) $\xi_{t}$ is strictly stationary in $L^1$;
(iii) $V_{t} \ind \xi_{t}$; and (iv) $f$ is a deterministic function. Then the assumption \texttt{SAM.BR}
holds. This implies that    
\[
\mathrm{E}[Y_{t,h}(a)]=\mathrm{E}[Y_{t+h}\mid A_{t}=a].
\]%
Assume that assignments are binary for all $t$, and define the statistic    
\[
\widehat{T}=T(A_{1:T},Y_{1:T})=W\left( 
\begin{array}{c}
\widehat{\mathrm{E}}[Y_{t}|A_{t}=1]-\widehat{\mathrm{E}}[Y_{t}|A_{t}=0] \\ 
\widehat{\mathrm{E}}[Y_{t+1}|A_{t}=1]-\widehat{\mathrm{E}}[Y_{t+1}|A_{t}=0]
\\ 
\\ 
\widehat{\mathrm{E}}[Y_{t+H}|A_{t}=1]-\widehat{\mathrm{E}}[Y_{t+H}|A_{t}=0]%
\end{array}%
\right) ,
\]%
where $W$ is some conformable, non-stochastic weight matrix and $\hat \E[\cdot \mid \cdot ]$ is an approximation of the conditional expectation. If the assignments are binary, these estimated conditional expectations are just differences in means; otherwise, $\mathrm{E}[Y_{t+h}\mid A_{t}=a]$ may be estimated using, e.g., a kernel
regression. Typically $W$ will be a 
selection matrix, focusing on a single lead.

To carry out inference we condition on all the potential outcomes. We assume the composite null hypothesis and that the causal effects follow Example \ref{ex:causalPQ} with $P=0$ and $Q=H$. Note that under the null $Y_{t,h}(1)-Y_{t,h}(0)=\mathrm{E}[Y_{t,h}(1)]-\mathrm{E}[Y_{t,h}(0)]=\psi_{h}$ almost surely. 

Letting $Y_{1:T}^{\ast }:=Y_{1:T}(A_{1:T}^{\ast })$, under the null hypothesized $\theta$, for $b=1,...,B$, simulate
\[
A_{1:T}^{\ast(b) },Y_{1:T}^{\ast(b) }\quad \text{and} \quad T_{b}^{\ast }:=T(A_{1:T}^{\ast(b)},Y_{1:T}^{\ast(b)}),
\]%
and compute, for example, if $T$ is a scalar, $L_{b}^{\ast }:=\widehat{T}-T_{b}^{\ast }.$  
For a 95\% test, we reject the null based on null hypothesized $\theta$ if 
\[
0\notin \left[ Q_{L^{\ast }}(0.025),Q_{L^{\ast }}(0.975)\right] 
\]%
for $Q_{L^{\ast }}$ the quantile function for $L^*$.
We estimate the quantiles by the sample quantiles from the simulated $%
L_{1}^{\ast },...,L_{B}^{\ast }$. A 95\% CI for $\theta $ are all the values of $\theta \in \Theta $ for
which the associated null is not rejected.  \ 
\end{example}

\subsection{Stochastic dynamic programming}\label{sec:control}

There is a large literature on control, e.g., \cite{AndersonMoore(79)}, \cite%
{Whittle(81),Whittle(82),Whittle(96)},  \cite{Bertsekas(87)} and \cite{HansenSargent(14)}. Here, the discussion will connect the control literature to the potential
system, using the above notation, but with no features. \ Often the
control literature is collected under the label of stochastic dynamic
programming.

Define the sequence which would minimize expected future loss $J_t$ from taking
actions $a_{t:T}\in \mathcal{A}_{t:T}$ as 
\begin{equation*}
\widehat{a}_{t:T|t-1}:={\arg}\underset{a_{t:T} \in \mathcal{A}_{t:T}}{\min} \ \E
[J_{t}(A_{1:t-1},a_{t:T})|D_{1:t-1}],\quad t\in \left\{ 1,2,...,T\right\}
,
\end{equation*}%
give information at time $t-1$. Then take the time-t assignment as 
\begin{equation*}
A_{t}=\widehat{a}_{t|t-1},\quad t=1,...,T,
\end{equation*}%
ignoring the other $\widehat{a}_{t+1:T|t-1}$. Recall that the potential system's ``consistency''
means that 
\begin{equation*}
D_{t}=D_{t}(A_{1:t})=D_{t}(A_{1:t-1},\widehat{a}_{t|t-1}).
\end{equation*}%
Here, the $A_{t}$ is previsible --- it is stochastic but entirely dependent on
past outcome data. \ Thus the {\tt SAM} is
probabilistically degenerate. 

The time $t$ \textquotedblleft
value\textquotedblright\ is defined as 
\begin{eqnarray*}
\mathscr{V}_{t} &:=& \E\left[J_{t}(A_{1:t-1},\widehat{a}_{t:T|t-1})|D_{1:t-1}\right] =\underset{a_{t:T} \in \mathcal{A}_{t:T}}{\min} \ \E
[J_{t}(A_{1:t-1},a_{t:T})|D_{1:t-1}],
\end{eqnarray*}
the best possible expected future loss, given past data. In stochastic dynamic programming, the sequence $\left\{ \widehat{a}_{t|t-1}\right\} $ is
typically called the control sequence, as it is assumed to be under the
control of the researcher. Under the {\tt PS}, it determines the {\tt SAM}. 

The causal effect of moving the assignment to $a_{t}$, away from the optimal
value $a_{t}^{\prime }=A_{t}=\widehat{a}_{t|t-1}$, is particularly
interesting. The immediate dynamic causal effect on the outcome is 
\begin{equation*}
Y_{t,0}(a_{t})-Y_{t,0}(a_{t}^{\prime })=Y_{t}(A_{1:t-1},a_{t})-Y_{t}
\end{equation*}%
which spills over to the effect at horizon $h=1$,
\begin{equation*}
Y_{t,1}(a_{t})-Y_{t,1}(a_{t}^{\prime
})=Y_{t+1}(A_{1:t-1},a_{t},A_{t,1}(a_{t}))-Y_{t+1},
\end{equation*}%
where $A_{t,1}(a_{t})=\widehat{a}_{t+1|t}(a_{t})$ with 
\begin{equation*}
\widehat{a}_{t+1:T|t}(a_{t}):={\arg} \underset{a_{t+1:T} \in \mathcal{A}_{t+1:T}}{\min} \ \E
[J_{t}(A_{1:t-1},a_{t},a_{t+1:T})|D_{1:t-1},D_{t,0}(a_{t})]
\end{equation*}%
being the optimal assignment path from time $t+1$ to time $T$ thinking we had seen
the data $D_{1:t-1},D_{t,0}(a_{t})$. 

In the control context, the researcher typically assumes losses are time
separable, that is 
\begin{equation*}
J_{t}(A_{1:t-1},a_{t:T})=\sum_{j=t}^{T+1}L_{j}(A_{1:t-1},a_{t:j}),\quad
L_{t}(a_{1:t})=\ell_t(Y_{t-1}(a_{1:t-1}),a_{t}),
\end{equation*}%
where the $\ell_t= \{\ell_t(y,a):y \in \mathcal{Y}, a\in \mathcal{A}\}$ are known deterministic functions for all $t=1,...,T$, and in the final period for all $a,a' \in\mathcal{A}$, the $\ell_{T+1}(y,a)=\ell_{T+1}(y,a'):=\ell_{T+1}(y)$. Hence $L_{t}(a_{1:t})$ is random only
because of the potential outcome, while $L_{t+h}(A_{1:t-1},a_{t:t+h})$ is
random because of the assignment sequence and the potential outcome.

Looking forward from time $t$ to time $T$, the overall loss for a sequence
of future assignments $a_{t:T}$ sums up future individual period losses. This can be written as a backward
recursion 
\begin{equation*}
J_{t}(A_{1:t-1}, a_{t:T})=L_{t}(A_{1:t-1}, a_{t})+J_{t+1}(A_{1:t-1}, a_{t},a_{t+1:T}).
\end{equation*}
This recursive structure sets the stage for defining and solving Bellman equations.

\section{Conclusion}\label{sect:conclusion}

This paper defines and works on the potential system ({\tt PS}), a foundational nonparametric model for studying how an assignment at time $t$ causally affects an outcome at time $t+h$, possibly in the presence of confounders. It yields familiar measures of causality --- explored through examples connected to various other literatures in time series causality --- that can be mapped to data-based predictions under familiar assumptions from cross-sectional causal inference. Because this foundational model is defined in terms of low-level nonparametric primitives, it can be readily extended to numerous other time series causality settings, such as design-based based inference, the study of more exotic causal effects, control, and beyond.

This paper does not discuss the intricate details of estimation nor inference --- our focus is on identification. \cite{PlagborgMullerKolesar(24)} and \cite{BallinariWehrli(24)} are population-based inference papers which can sit on top of our {\tt PS}, giving them causal meaning,   covering inference for the relationship between what we call the branch potential outcomes and the assignments.  This builds off inference results in \cite{RambachanShephard(21)}. Other recent work on non-linear impulse response functions include \cite{GoncalvesHerreraKillianPesavento(24),GoncalvesHerreraKillianPesavento(21)} and \cite{GourierouxLee(23)}.

\clearpage

\baselineskip=12pt

\bibliographystyle{chicago}
\bibliography{neil.bib,jake.bib}

\baselineskip=20pt

\clearpage

\section{Appendix}

\subsection{Proofs}

\thmbranchseq*
\begin{proof}
Note that, by the recursion of Example \ref{ex:npsem}, any $Y_{t,h}(a_{t})$ for $h =1,...,H$ only depends on variation in, for any $a_t$,
\[
D_{1:t-1},X_t, W_t, \{\varepsilon_{t+h}\}_{h \geq s \geq 1}
\]
and for $h=0$ depends only on $D_{1:t-1},X_t, W_t,$.

To satisfy condition {\tt SAM.BSU}, it suffices that
$$\big[(D_{1:t-1}, X_t, V_t) \ind (D_{1:t-1},X_t, W_t, \{\varepsilon_{t+s}\}_{h \geq s \geq 1})\big] \mid D_{1:t-1},X_t$$
which reduces to the condition that
$$\big[V_t \ind (W_t, \{\varepsilon_{t+s}\}_{h \geq s \geq 1})\big] \mid D_{1:t-1},X_t.$$
Two conditions that imply this condition, using the contraction property of conditional independence, are
\[
V_t \ind W_t \mid D_{1:t-1}, X_t, \quad V_t \ind \{\varepsilon_{t+s}\}_{h \geq s \geq 1} \mid D_{1:t-1}, X_t, W_t.
\]
However, this second condition is implied by (using the weak union property and recalling that $X_t = \chi_t (D_{1:t-1}, U_t)$)
\[
\{\varepsilon_{t+s}\}_{h \geq s \geq 1} \ind (\varepsilon_t, D_{1:t-1})
\]
which is indeed true by the joint independence of the $\varepsilon_{t}$ across all time. As such, assuming the first property completes the proof.

To satisfy {\tt SAM.BSR} it suffices that
\[
\big[(D_{1:t-1}, X_t, V_t) \ind (D_{1:t-1},X_t, W_t, \{\varepsilon_{t+s}\}_{h \geq s \geq 1})\big] \mid D_{1:t-1}
\]
which reduces to the condition that 
\[
(X_t, V_t) \ind (X_t, W_t,\{\varepsilon_{t+s}\}_{h \geq s \geq 1}) \mid D_{1:t-1}.
\]
By assuming that $X_t$ is $D_{1:t-1}$-measurable, we only require that
\[
V_t \ind (W_t,\{\varepsilon_{t+s}\}_{h \geq s \geq 1}) \mid D_{1:t-1}.
\]
This condition similarly follows from the two other conditions (using the contraction property)
\[
[V_t \ind W_t] \mid D_{1:t-1}, \quad V_t \ind \{\varepsilon_{t+s}\}_{h \geq s \geq 1} \mid W_t, D_{1:t-1}.
\]
The conclusion of this part of the theorem is then immediate using the same arguments as for the first part of the theorem.

If $A_t = V_t$ then to satisfy {\tt SAM.BU} it suffices that
\[
\big[V_t \ind (D_{1:t-1},X_t, W_t, \{\varepsilon_{t+s}\}_{h \geq s \geq 1})\big] \mid X_t
\]
which reduces to the condition that
\[
\big[V_t \ind (D_{1:t-1}, W_t, \{\varepsilon_{t+s}\}_{h \geq s \geq 1})\big] \mid X_t.
\]
By contraction, this condition is implied by the conditions
\[
\big[V_t \ind (D_{1:t-1}, W_t)\big] \mid X_t, \quad \big[V_t \ind \{\varepsilon_{t+s}\}_{h \geq s \geq 1}\big] \mid X_t,D_{1:t-1}, W_t.
\]
The second condition is once again granted using the same argument as for the first part of the theorem proof, and we assume the other.

If $A_t = V_t$ then to satisfy {\tt SAM.BR} it suffices that
\[
\big[V_t \ind (D_{1:t-1},X_t, W_t, \{\varepsilon_{t+s}\}_{h \geq s \geq 1})\big].
\]
By contraction, this condition is implied by the conditions
\[
V_t \ind (D_{1:t-1},X_t, W_t), \quad V_t \ind \{\varepsilon_{t+s}\}_{h \geq s \geq 1} \mid (D_{1:t-1},X_t, W_t).
\]
The second condition is once again granted using the same argument as for the first part of the the theorem proof, and we assume the other.

\end{proof}

\thmbsu*
\begin{proof} We start with the ${\tt CFTE}_{t,h}(a_t,a_t')$ case.  Define the regression function 
\[
\mu_{t,h}^{a_t}(X_t, D_{1:t-1}) := \E[Y_{t+h} \mid A_t = a_t, X_{t},D_{1:t-1}].
\]
Then note that for any $a_t \in \mathcal{A}_t$
\begin{align*}
    \mu_{t,h}^{a_t}(X_t, D_{1:t-1}) &= \E[Y_{t+h} \mid A_t = a_t, X_{t},D_{1:t-1}] \\
    &= \E[Y_{t,h}(a_t) \mid A_t = a_t, X_{t},D_{1:t-1}] \tag{{\tt PS}} \\
    &= \E[Y_{t,h}(a_t) \mid X_{t},D_{1:t-1}]. \tag{{\tt SAM.BSU-}}
\end{align*}
It is then clear that for any $a_t, a_t' \in \mathcal{A}_t$
\[
{\tt CFTE}_{t,h}(a_t,a_t')=\mu_{t,h}^{a_t}(X_t, D_{1:t-1}) - \mu_{t,h}^{a_t'}(X_t, D_{1:t-1}).
\]
The corresponding proofs for ${\tt FTE}_{t,h}(a_t,a_t')$, ${\tt ATE}_{t,h}(a_t,a_t')$, and ${\tt CATE}_{t,h}(a_t,a_t')$ are immediate based on this derivation, following an identical structure.
 
\end{proof}

\thmiv*
\begin{proof}
Recall that, under a one time intervention at time $t$, in the IV {\tt PS}, we can write
\begin{align*}
    A_t = \alpha_t(\tilde{D}_{1:t-1}, X_t,V_t), \quad Y_{t+h}= A_tY_{t,h}(1) + (1-A_t) Y_{t,h}(0) = Y_{t,h}(0) + \{Y_{t,h}(1)-Y_{t,h}(0)\}A_t
\end{align*}
under the {\tt PS} system consistency assumptions. Thus
\begin{align*}
    \E[Y_{t+h} \mid X_t = x_t, \tilde{D}_{1:t-1}] &= \E[Y_{t,h}(0) + \{Y_{t,h}(1)-Y_{t,h}(0)\}A_t \mid X_t = x_t, \tilde{D}_{1:t-1}] \\
    &= \E[Y_{t,h}(0) + \{Y_{t,h}(1)-Y_{t,h}(0)\}A_t(x_t) \mid X_t = x_t, \tilde{D}_{1:t-1}] \\
    &= \E[Y_{t,h}(0) + \{Y_{t,h}(1)-Y_{t,h}(0)\}A_t(x_t) \mid \tilde{D}_{1:t-1}]
\end{align*}
using the fact that $Y_{t,h}(a_t)$ is invariant to the instrument in the IV {\tt PS} and that, using $U_t \ind (V_t, W_t) \mid \tilde D_{1:t-1}$,
\[
X_t \ind \big[A_t(1),A_t(0), Y_{t,h}(1), Y_{t,h}(0)\big] \mid \tilde D_{1:t-1}.
\]
To see this, notice that for $h=0$, this condition reduces to
\[
\chi_t(\tilde D_{1:t-1},U_{t}) \ind \big( \alpha_t(\tilde D_{1:t-1}, 1,V_t), \alpha_t(\tilde D_{1:t-1}, 0,V_t), \gamma_t(\tilde D_{1:t-1},1,W_{t}),\gamma_t(\tilde D_{1:t-1},0,W_{t})\big) \mid \tilde D_{1:t-1}
\]
which is satisfied if $U_t \ind (V_t, W_t) \mid \tilde D_{1:t-1}$;
for $h = 1$, letting $\tilde D_{t}(a_t):=(a_t, \gamma_t(\tilde D_{1:t-1},a_t,W_{t}))$, we have that $$Y_{t,1}(a_t) = \gamma_{t+1}(\tilde D_{1:t-1}, \tilde D_{t}(a_t), \alpha_{t+1}(\tilde D_{1:t-1},\tilde D_{t}(a_t), \chi_{t+1}(\tilde D_{1:t-1},\tilde D_{t}(a_t),U_{t+1}),V_{t+1}) ,W_{t+1}),$$
and by continuing the recursion we see that any $Y_{t,h}(a_t)$ with $h>0$ only depends on variation in $(\tilde D_{1:t-1}, W_t, \{\varepsilon_{t+s}\}_{h\geq s \geq 1})$, so we just (sufficiently) need that
\[
U_{t} \ind \big (V_t, W_t, \{\varepsilon_{t+s}\}_{h\geq s \geq 1}) \mid \tilde D_{1:t-1}
\]
which reduces to the two conditions (by contraction)
\[
U_{t} \ind \big (V_t, W_t) \mid \tilde D_{1:t-1}, \quad U_{t} \ind \big \{\varepsilon_{t+s}\}_{h\geq s \geq 1} \mid \tilde D_{1:t-1}, (V_t, W_t)
\]
where the second condition is satisfied if
\[
\{\varepsilon_{t+s}\}_{h\geq s \geq 1} \ind (\tilde D_{1:t-1}, (V_t, W_t,U_{t}))
\]
which is true by assumption of the SEM {\tt PS}.

Thus
\begin{align*}
    &\E[Y_{t+h} \mid X_t = 1, \tilde{D}_{1:t-1}] - \E[Y_{t+h} \mid X_t = 0, \tilde{D}_{1:t-1}] \\
    = &\E[\{Y_{t,h}(1)-Y_{t,h}(0)\}(A_t(1)-A_t(0)) \mid \tilde{D}_{1:t-1}] \\
    = &\E[\{Y_{t,h}(1)-Y_{t,h}(0)\}\mathbf{1}\{A_t(1)> A_t(0)\} \mid \tilde{D}_{1:t-1}]
\end{align*}
where the last line follows by the monotonicity condition on $\alpha_t$, i.e., that $A_t(1) \geq A_t(0)$ almost surely (in words: in any state of the world $\omega_t \in \Omega_t$, for $\Omega_t$ the underlying sample space of the {\tt PS} random variables at time $t$, the instrument has the same directional effect on the single unit of interest in the time series).

Similarly, we see that
\begin{align*}
    \E[A_{t} \mid X_t = x_t, \tilde{D}_{1:t-1}] &= \E[A_t(x_t) \mid X_t = x_t, \tilde{D}_{1:t-1}]= \E[A_t(x_t) \mid \tilde{D}_{1:t-1}]
\end{align*}
and so
\begin{align*}
    \E[A_{t} \mid X_t = 1, \tilde{D}_{1:t-1}] - \E[A_{t} \mid X_t = 0, \tilde{D}_{1:t-1}] &= \E[(A_t(1)-A_t(0)) \mid \tilde{D}_{1:t-1}] \\
    &=P(A_t(1) > A_t(0) \mid \tilde{D}_{1:t-1})
\end{align*}
again under monotonicity. As such, we can conclude using the law of total expectation that
\[
\E[Y_{t,h}(1) - Y_{t,h}(0) \mid \mathbf{1}\{A_t(1)> A_t(0) \}=1,  \tilde{D}_{1:t-1}] = \frac{E[Y_{t+h} \mid X_t = 1, \tilde{D}_{1:t-1}] - \E[Y_{t+h} \mid X_t = 0, \tilde{D}_{1:t-1}]}{E[A_{t} \mid X_t = 1, \tilde{D}_{1:t-1}] - \E[A_{t} \mid X_t = 0, \tilde{D}_{1:t-1}]}
\]
assuming that $\E[A_{t} \mid X_t = 1, \tilde{D}_{1:t-1}] - \E[A_{t} \mid X_t = 0, \tilde{D}_{1:t-1}] > 0$ almost surely.
\end{proof}

\thmadecomp*
\begin{proof} By the time series prediction decomposition, the assignment mechanism is determined by the sequence of conditional laws of 
$
[A_t,Z_t(\mathcal{A}_{1:T})] \mid [A_{1:t-1},Z_{1:t-1}(\mathcal{A}_{1:T})],
$
for $t=1,...,T$.  Assumptions {\tt CP.1a} and {\tt CP.1b}, non-anticipation and triangularity, implies this simplifies to the conditional law 
$$
[A_t,Z_t(\mathcal{A}_{1:t})] \mid [A_{1:(t-1)},Z_{1:(t-1)}(\mathcal{A}_{1:(t-1)})],\quad t=1,...,T.
$$
In turn this splits into a marginal and conditional law
$$
Z_t(\mathcal{A}_{1:t}) \mid [A_{1:(t-1)},Z_{1:(t-1)}(\mathcal{A}_{1:(t-1)})]\quad \text{and} \quad 
A_t \mid [A_{1:(t-1)},Z_{1:t}(\mathcal{A}_{1:t})].
$$
These simplify to the stated result using the generating structure from Example \ref{ex:npsem} of the {\tt PS} and using that $V_t \ind (U_t, W_t, \varepsilon_{1:t-1}) \mid D_{1:t-1}$. 

To see this, first note that
\[
Z_t(\mathcal{A}_{1:t}) \ind A_{1:(t-1)} \mid Z_{1:(t-1)}(\mathcal{A}_{1:(t-1)})
\]
because the variation in $Z_t(\mathcal{A}_{1:t})$ only depends on $Z_{1:(t-1)}(\mathcal{A}_{1:(t-1)}), U_t, W_t$, and $U_t, W_t$ are independent over time. 

For the second law, notice that, letting $\mathcal{A}_{1:t-1}^- := \mathcal{A}_{1:t-1}\setminus \{a_{1:t-1}^*\}$ for some arbitrary path $a_{1:t-1}^*$ and letting $\{z_{1:t-1}^{a_{1:t-1}},z_t^{a_{1:t}},...\}$ index dummy variables,
\begin{align*}
    &A_t \mid A_{1:t-1}=a_{1:t-1}^*, \{Z_{1:t-1}(a_{1:t-1})= z_{1:t-1}^{a_{1:t-1}}\}_{a_{1:t-1} \in \mathcal{A}_{1:t-1}}, \{Z_t(a_{1:t})=z_t^{a_{1:t}}\}_{a_{1:t} \in \mathcal{A}_{1:t}} \\
    \overset{L}{=}\,\,\,&A_t \mid A_{1:t-1}=a_{1:t-1}^*, \begin{pmatrix}Z_{1:t-1}(a_{1:t-1}^*)= z_{1:t-1}^{a_{1:t-1}^*} \\ \{Z_{1:t-1}(a_{1:t-1})= z_{1:t-1}^{a_{1:t-1}}\}_{a_{1:t-1} \in \mathcal{A}_{1:t-1}^- }\end{pmatrix}, \begin{pmatrix}\{Z_t(a_{1:t-1}^*, a_t)=z_t^{a_{1:t-1}^*,a_t}\}_{a_{t} \in \mathcal{A}_{t}} \\ \{Z_t(a_{1:t-1}, a_t)=z_t^{a_{1:t-1},a_t}\}_{a_{1:t-1} \in \mathcal{A}_{1:t-1}^-,a_{t} \in \mathcal{A}_{t}}\end{pmatrix} \\
    \overset{L}{=}\,\,\,&A_t \mid A_{1:t-1}=a_{1:t-1}^*, \begin{pmatrix}Z_{1:t-1}= z_{1:t-1}^{a_{1:t-1}^*} \\ \{Z_{1:t-1}(a_{1:t-1})= z_{1:t-1}^{a_{1:t-1}}\}_{a_{1:t-1} \in \mathcal{A}_{1:t-1}^- }\end{pmatrix}, \begin{pmatrix}\{Z_t(A_{1:t-1}, a_t)=z_t^{a_{1:t-1}^*,a_t}\}_{a_{t} \in \mathcal{A}_{t}} \\ \{Z_t(a_{1:t-1}, a_t)=z_t^{a_{1:t-1},a_t}\}_{a_{1:t-1} \in \mathcal{A}_{1:t-1}^-,a_{t} \in \mathcal{A}_{t}}\end{pmatrix} \\
    \overset{L}{=}\,\,\,&A_t \mid D_{1:t-1}=d_{1:t-1}, X_t=x_t,\{Z_{1:t-1}(a_{1:t-1})= z_{1:t-1}^{a_{1:t-1}}\}_{a_{1:t-1} \in \mathcal{A}_{1:t-1}^- }, \begin{pmatrix}\{Y_t(A_{1:t-1}, a_t)=y_t^{a_{1:t-1}^*,a_t}\}_{a_{t} \in \mathcal{A}_{t}} \\ \{Z_t(a_{1:t-1}, a_t)=z_t^{a_{1:t-1},a_t}\}_{a_{1:t-1} \in \mathcal{A}_{1:t-1}^-,a_{t} \in \mathcal{A}_{t}}\end{pmatrix}.
\end{align*}
Recalling that $A_t = \alpha_t(D_{1:t-1},X_t,V_t)$, if we can show that
\begin{align*}
    &V_t \mid D_{1:t-1}=d_{1:t-1}, X_t=x_t,\{Z_{1:t-1}(a_{1:t-1})= z_{1:t-1}^{a_{1:t-1}}\}_{a_{1:t-1} \in \mathcal{A}_{1:t-1}^- }, \begin{pmatrix}\{Y_t(A_{1:t-1}, a_t)=y_t^{a_{1:t-1}^*,a_t}\}_{a_{t} \in \mathcal{A}_{t}} \\ \{Z_t(a_{1:t-1}, a_t)=z_t^{a_{1:t-1},a_t}\}_{a_{1:t-1} \in \mathcal{A}_{1:t-1}^-,a_{t} \in \mathcal{A}_{t}}\end{pmatrix} \\
    \overset{L}{=}\,\,\,&V_t \mid D_{1:t-1}=d_{1:t-1}, X_t=x_t, \{Y_t(A_{1:t-1}, a_t)=y_t^{a_{1:t-1}^*,a_t}\}_{a_{t} \in \mathcal{A}_{t}}
\end{align*}
then we have completed the proof. This is true if we have that
\[
V_t \ind \{Z_{1:t}(a_{1:t})\}_{a_{1:t-1} \in \mathcal{A}_{1:t-1}^-,a_{t} \in \mathcal{A}_{t}} \mid  D_{1:t-1}=d_{1:t-1}, X_t=x_t,\{Y_t(A_{1:t-1}, a_t)=y_t^{a_{1:t-1}^*,a_t}\}_{a_{t} \in \mathcal{A}_{t}}
\]
which is granted if
\[
V_t \ind \big(\{Z_{1:t}(a_{1:t})\}_{a_{1:t-1} \in \mathcal{A}_{1:t-1}^-,a_{t} \in \mathcal{A}_{t}}, X_t,\{Y_t(A_{1:t-1}, a_t)\}_{a_{t} \in \mathcal{A}_{t}}\big) \mid D_{1:t-1}
\]
which can be re-written as
\[
V_t \ind \big(\{Z_{1:t}(a_{1:t})\}_{a_{1:t-1} \in \mathcal{A}_{1:t-1}^-,a_{t} \in \mathcal{A}_{t}}, \chi_t(D_{1:t-1}, U_t),\{\gamma_t(D_{1:t-1},\chi_t(D_{1:t-1}, U_t), a_t,W_t)\}_{a_{t} \in \mathcal{A}_{t}}\big) \mid D_{1:t-1}.
\]
Recalling that $V_t$ is independent of the past and that
\[
Z_{t}(a_{1:t}) = \begin{pmatrix}
    \chi_t(D_{1:t-1}(a_{1:t-1}), U_t) \\
    \gamma_t(D_{1:t-1}(a_{1:t-1}),\chi_t(D_{1:t-1}(a_{1:t-1}), U_t), a_t,W_t)
\end{pmatrix}
\]
our proof is then completed by assuming that  $V_t \ind (U_t, W_t, \varepsilon_{1:t-1}) \mid D_{1:t-1}$.

\end{proof}

\thmsampling*
\begin{proof} Use the decomposition result from Theorem \ref{thm:adecomp}, and then we can simulate from 
\[
A^*_t \mid [A^*_{1:t-1},X_{1:t}(A^*_{1:t-1}),Y_{1:t}(A^*_{1:t-1},\mathcal{A}_t)].
\]
The stated result is then immediate from {\tt PS} exogeneity and the additional assumption.

\end{proof}

\end{document}